\newtheorem{theorem}{Theorem}
\newtheorem{proposition}{Proposition}
\long\def\omitit#1{}
\begin{document}

\title{A Multi-Agent Prediction Market based on Partially Observable Stochastic Game }

\author{Janyl Jumadinova (1), Prithviraj Dasgupta (1) \\
  ((1) University of Nebraska at Omaha, \\Computer Science Department, Omaha NE)
}
\maketitle

We present a novel, game theoretic representation of
a multi-agent prediction market using a partially observable
stochastic game with information (POSGI).  We then describe a
correlated equilibrium (CE)-based solution strategy for this
game which enables each agent to dynamically calculate the
prices at which it should trade a security in the prediction market.
We have extended our results to risk averse traders and shown that
a Pareto optimal correlated equilibrium strategy can be used to
incentively truthful revelations from risk averse agents. Simulation
results comparing our CE strategy with five other strategies commonly
used in similar markets, with both risk neutral and risk averse agents,
show that the CE strategy improves price predictions and provides higher
utilities to the agents as compared to other existing strategies.


\section{Introduction}

Forecasting the outcome of events that will happen in the future
is a frequently indulged and important task for humans.
It is encountered in various domains such as forecasting
the outcome of geo-political events, betting on the outcome
of sports events, forecasting the prices of financial
instruments such as stocks, and casual predictions
of entertainment events. Despite the ubiquity of such
forecasts, predicting the outcome of future events is
a challenging task for humans or even computers -
it requires extremely complex calculations involving
a reasonable amount of domain knowledge, significant amounts of
information processing and accurate reasoning. Recently,
a market-based paradigm called {\em prediction markets} has shown
ample success to solve this problem by using the aggregated
`wisdom of the crowds' to predict the outcome of future events.
This is evidenced from the
successful predictions of actual events done by prediction markets run
by the Iowa Electronic Marketplace(IDEM), Tradesports,
Hollywood Stock Exchange, the Gates-Hillman market \cite{Othman10},
and by companies such as Hewlett Packard \cite{Plott02},
Google \cite{Cowgill09} and Yahoo's Yootles \cite{YootlesURL}.
A prediction market for a real-life
event (e.g., ``Will Obama win the 2008 Democratic Presidential
nomination?'') is run for several
days before the event happens. The event has a binary outcome (yes/no or $1/0$).
On each day, humans, called \textit{traders}, that are interested in the outcome of the event
express their belief on the possible outcome of the event
using the available information related to the event.
The information available to the different traders is asymmetric,
meaning that different traders can possess bits and pieces of
the whole information related to the event. The belief values
of the traders about the outcome of the event
are expressed as probabilities. A special entity
called the {\em market maker} aggregates the probabilities from all
the traders into a single probability value that represents
the possible outcome of the event. The main idea behind
the prediction market paradigm is that the collective, aggregated
opinions of humans on a future event represents the probability of occurrence
of the event more accurately than corresponding surveys and opinion polls.
Despite their overwhelming success, many aspects of
prediction markets such as a formal representation
of the market model, the strategic behavior of the market's
participants and the impact of information from external
sources on their decision making have not been analyzed
extensively for a better understanding. We attempt to
address this deficit in this paper by developing
a game theoretic representation of the traders' interaction
and determining their strategic behavior using the equilibrium
outcome of the game.
We have developed a
correlated equilibrium (CE)-based solution strategy for a partially
observable stochastic game representation of the prediction market.
We have also empirically compared our CE strategy with five other strategies commonly
used in similar markets, with both risk neutral and risk averse agents,
and showed that the CE strategy improves price predictions and provides higher
utilities to the agents as compared to other existing strategies.

\section{Related Work}
Prediction markets were started in 1988 at the Iowa Electronic Marketplace \cite{IEMURL} to
investigate whether betting on the outcome of gee-political events (e.g. outcome of presidential
elections, possible outcome of international political or military crises, etc.) using real
money could elicit more accurate information about the event's outcome than regular
polls. Following the success of prediction markets in eliciting information about events'
outcomes, several other prediction markets have been started that trade on events using either real or virtual money.
Prediction markets have been used in
various scenarios such as predicting the outcome of gee-political
events such as U.S. presidential elections, determining the outcome
of sporting events, predicting the box office performance of
Hollywood movies, etc.
Companies such as Google, Microsoft, Yahoo and Best Buy have all
used prediction markets internally to tap the collective
intelligence of people within their organizations. The seminal work
on prediction market analysis
\cite{Gjerstad06, Wolfers04, Wolfers06} has shown that the mean belief
values of individual traders about the outcome of a future event
corresponds to the event's market price. Since then researchers have
studied prediction markets from different perspectives. Some
researchers have studied traders' behavior by modeling their
interactions within a game theoretic framework.
For example, in \cite{Chen06, Feigenbaum05} the authors have used a Shapley-Shubik
game that involves behavioral assumptions on the agents
such as myopic behavior and truthful revelations and
theoretically analyzed the aggregation function and convergence
in prediction markets.
Chen et. al. \cite{YChen06}
 characterized the uncertainty
of market participants' private information by incorporating aggregate uncertainty in
their market model. However, both these models consider traders that are risk-neutral,
myopic and truthful.
Subsequently,
in \cite{Chen09} the authors have relaxed some of these assumptions within a Bayesian game setting
and investigated the conditions under
which the players reveal their beliefs truthfully.
Dimitrov and Sami \cite{Dimitrov08} have also studied the effect of non-myopic revelations by trading
agents in a prediction market and concluded that myopic strategies are almost never
optimal in the market with the non-myopic traders and there is a need for discounting.
Other researchers have focused on designing rules that a
market maker can use to combine the opinions (beliefs) from different
traders. Hanson \cite{Hanson07} developed a market scoring rule that is used to reward
traders for making and improving a prediction about the outcome of an event. He
further showed how any proper scoring rule can serve as an automated market maker.
 Das \cite{Das08} studied the effect of specialized agents called market-makers which behave
as intermediaries to absorb price shocks in the market. Das empirically studied different
market-making strategies and concludes that a heuristic strategy that adds a random
value to zero-profit market-makers improves the profits in the markets.

The main contribution of our paper, while building
on these previous directions, is to use a
partially observable stochastic game (POSG)
\cite{Hansen04} that can be used by each agent to
reason about its actions. Within this POSG model,
we calculate the correlated equilibrium strategy
for each agent using the aggregated price from
the market maker as a recommendation signal.
We have also considered the risk preferences
of trading agents in prediction markets and
shown that a Pareto optimal correlated equilibrium solution
can incentively truthful revelation
from risk averse agents.
We have compared the POSG/correlated equilibrium based pricing strategy
with five different pricing strategies used in
similar markets with pricing data obtained from real prediction
prediction market events. Our results show that the agents
using the correlated equilibrium strategy profile
are able to predict prices that are closer to the
actual prices that occurred in real markets and these
traders also obtain $35-127\%$ higher profits.

\section{Preliminaries}

{\bf Prediction Market.} Our prediction market consists of $N$ traders, with each
trader being represented by a software trading agent that performs actions
on behalf of the human trader. The market also has
a set of future events whose outcome has not yet been determined.
The outcome of each event is considered
as a binary variable with the outcome being $1$ if the
event happens and the outcome being $0$ if it does not.
Each outcome has a security associated with it.
A {\em security} is a contract that yields payments based on the outcome
of an uncertain future event. Securities can be purchased or sold by trading
agents at any time during the lifetime of the security's event.
A single event can have multiple securities
associated with it. Trading agents can purchase or sell one or more
of the securities for each event at a time.
A security expires when the event $e$
associated with it happens at the end of the event's
duration. At this point the
outcome of the event has just been determined
and all trading agents are notified of the event's outcome.
The trading agents that had
purchased the security during the lifetime of the event
then get paid $\$1$ if the event happens with an outcome of $1$,
or, they do not get paid anything and
lose the money they had spent on buying the security
if the event happens with an outcome of $0$.
On each day, a trading agent makes
a decision of whether to buy some securities related
to ongoing events in the market, or whether to sell or
hold some securities it has already purchased.

\begin{figure}[h]
\begin{center}
\includegraphics[width=3.5in]{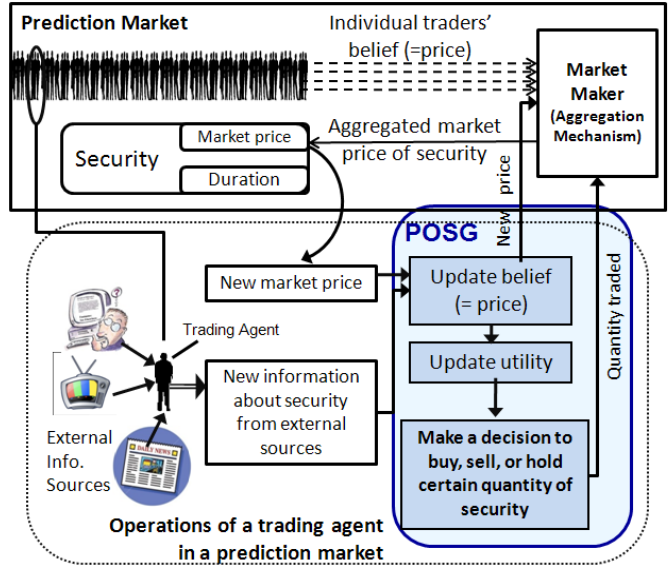}
\end{center}
\caption{Essential operations performed by a trading
agent in a prediction market and the portion of
the prediction market where the POSG representation
is applied.}
\label{fig_predmarket}
\end{figure}

{\bf Market Maker.} \label{sec_marketmaker}
Trading in a prediction market is similar to the
continuous double auction (CDA) protocol. However,
using the CDA directly as the trading protocol
for a prediction market leads to some problems
such as the \textit{thin market problem}
(traders not finding a trading partner
immediately) and traders potentially losing profit
because of revealing their willingness to trade
beforehand to other traders. To address these problems,
prediction markets
use a trading intermediary called the \textit{market maker}.
The market maker aggregates the buying and selling prices of
a security reported in the trades from the trading agents (they can
be identical) and `posts' these prices in the market.
Trading agents interact with the market maker to buy
and sell securities, so that they do not have to wait
for another trading agent to arrive before they can trade.
A market maker uses a market scoring rule (MSR) to calculate
the aggregated price of a security.
Recently, there has been considerable interest in analyzing the
MSRs \cite{Chen07,Hanson07,OthmanSandholmEC10}
and an MSR called the logarithmic
MSR (LMSR) has been shown to guarantee truthful revelation of beliefs
by the trading agents \cite{Hanson07}. LMSR allows a security's price, and
payoffs to agents buying/selling the security, to be expressed
in terms of its purchased or outstanding quantity. Therefore,
the `state' of the securities in the market can be captured
only using their purchased quantities.
The market-maker in our prediction market uses LMSR to update the market price
and to calculate the payoffs to trading agents. We briefly describe the
basic mechanism under LMSR.
Let $\Xi$ be the set of securities
in a prediction market and $\overline{q} = (q_1, q_2...q_{|\Xi|})$ denote
the vector specifying the number of units of each security held by
the different trading agents at time $t$. The LMSR first calculates
a cost function to reflect the total money wagered
in the prediction market by the trading agents as:
$C(\overline{q}) = b \cdot ln(\sum_{j=1}^{\mid \Xi \mid} e^{q_j/b})$.
It then calculates the aggregated market price $\pi$ for the
security $\xi \in \Xi$ as:
$\pi_{\xi} = e^{(q_\xi/b)}/ \sum_{j=1}^{\mid \Xi \mid} (e^{(q_j/b)})$ \cite{Chen07}
{\footnote {Parameter \textit{b} (determined by the market maker)
controls the monetary risk of the market maker as well as the quantity
of shares that trading agents can trade
at or near the current price without causing massive price swings.
Larger values for $b$ allows trading agents
to trade more frequently but also increases the market maker's
chances to lose money.}}.
Trading agents inform the quantity of a security they
wish to buy or sell to the market maker.
If a trading agent purchases $\delta$ units
of a security, the market maker determines the payment
the agent has to make as $C(\overline{q}+\delta) - C(\overline{q})$.
Correspondingly, if the agent sells $\delta$ quantity
of the security, it receives a payoff of
$C(\overline{q}) - C(\overline{q}-\delta)$
from the market maker.
Figure \ref{fig_predmarket} shows the operations
performed by a trader (trading agent) in a prediction market and
the portions of the prediction market that are affected by the POSG
representation described in the next section.

\section{Partially Observable Stochastic Games for Trading
Agent Interaction}
\label{POSGI}
For simplicity of explanation, we consider
a prediction market where a single security is being traded over
a certain duration. This duration is divided into trading periods,
with each trading period corresponding to a day in a real prediction market.
The `state' of the market is expressed as the quantity of the purchased
units of the security in the market. At the end of
each trading period, each trading agent receives information about the state
of the market from the market maker. With this prior
information, the task of a trading agent is to determine a suitable
quantity to trade for the next trading period, so that its utility
is maximized. In this scenario, the environment of the agent
is partially observable because other agents' actions and payoffs are
not known directly, but available through their aggregated beliefs.
Agents interact with each other in stages (trading periods), and in each stage
the state of the market is determined stochastically based on
the actions of the agents and the previous state. This scenario
directly corresponds to the setting of a partially observable
stochastic game \cite{Fudenberg,Hansen04}.
A POSG model offers several attractive features such as
{\em structured behavior} by the agents by using best response
strategies, {\em stability of the outcome} based on equilibrium
concepts, {\em lookahead capability} of the agent to plan their
actions based on future expected outcomes,
{\em ability to represent the temporal characteristics} of the interactions
between the agents, and, enabling all computations locally on the
agents so that the system is {\em robust and scalable}.

Previous research has shown that
information related parameters in a prediction market
such as information availability, information reliability,
information penetration, etc., have a considerable effect
on the belief (price) estimation
by trading agents. Based on these findings, we
posit that a component to model the impact of information related
to an event should be added to the POSG framework.
With this feature in mind, we propose an interaction model
called a partially observable stochastic game with information (POSGI)
for capturing the strategic decision making by trading agents.
A POSGI is defined as: \\$\Gamma = (N, S, (A_i)_{i \in N}, (R_i)_{i \in
  N}, T, (O_i)_{i \in N}, \Omega, ({\cal I}_i)_{i \in N})$, where
$N$ is a finite set of agents, $S$ is a finite, non-empty set of
states - each state corresponding to certain quantity of the security
being held (purchased) by the trading agents.
$A_i$ is a finite non-empty action space of agent $i$ s.t.
$\overline{a_k} = (a_{1,k},...,a_{|N|,k})$ is the joint action of the agents and
$a_{i,k}$ is the action that agent $i$ takes in state $k$.
In terms of the prediction market, a trading agent's action
corresponds
to certain quantity of security it buys or sells, while the joint
action corresponds to changing the purchased quantity for a security
and taking the market to a new state.
$R_{i,k}$ is the reward or payoff for agent $i$ in state $k$
which is calculated using the LMSR market maker.
$T: T(s,\overline{a},s') = P(s'|s,\overline{a})$ is the transition
probability of moving from state $s$ to state $s'$
after joint action $\overline{a}$ has been performed by the agents.
$O_i$ is a finite non-empty set of observations for agent $i$
that consists of the market price and the information signal,
and $o_{i,k} \in O_i$ is the observation agent $i$ receives in state $k$.
$\Omega: \Omega(s_k,I_{i,k},o_{i,k}) = P(o_{i,k}|s_k,I_{i,k})$ is the
observation probability for agent $i$ of receiving observation $o_{i,k}$
in state $s_k$ when the information signal is $I_{i,k}$.
Finally, ${\cal I}_i$ is the information set received by agent $i$
for an event
${\cal I}_i = \bigcup_k I_{i,k}$ where $I_{i,k} \in \{-1, 0, +1\}$ is the information
received by agent $i$ in state $k$. The complete information arriving
to the market ${\cal I} = \bigcup_{i \in N} {\cal I}_i$ is
temporally distributed over the duration of the event, and,
following the information arrival patterns observed
in stock markets  \cite{Maheu04}, we assume that
new information arrives following a Poisson distribution.
Information that improves the probability of the positive outcome
of the event is considered positive($I_{i,k}=+1$) and vice-versa,
while information that does not affect the probability is
considered to have no effect ($I_{i,k}=0$).
For example, for a security related to the event
``Obama wins 2008 presidential elections'',
information about Oprah Winfrey endorsing Obama
would be considered high impact positive information and information
about Obama losing the New Hampshire Primary would
be considered negative information.

\begin{figure}[h]
\begin{center}
\includegraphics[width=3.5in,angle=-90]{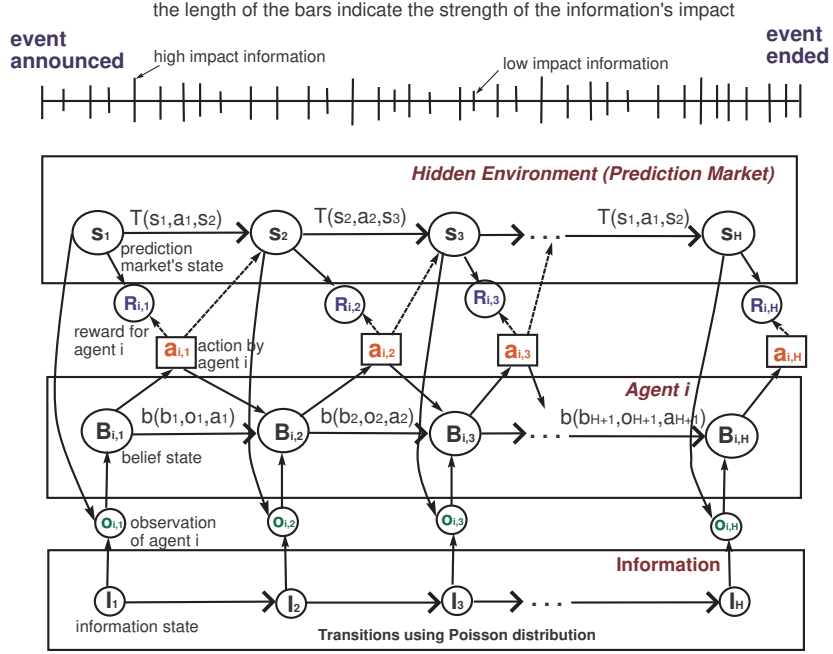}
\end{center}
\caption{An agent interactions with the hidden environment (prediction
  market) and an external information source.}
\label{automata}
\end{figure}

Based on the POSGI formulation of the prediction market,
the interaction of an agent with the environment (prediction market) and the information
source can be represented by the transition diagram shown
in Figure \ref{automata}{\footnote{We have only
shown one agent $i$ to keep the diagram legible, but
the same representation is valid for every agent in the prediction
market. The dotted lines represent that the reward and environment
state is determined by the joint action of all agents.}}.
The environment (prediction market) goes through a set of states
$\tilde{S} = \{s_1,...,s_H\}: \tilde{S} \in S$, where $H$ is the
duration of the event in the prediction market and $s_h$ represents the
state of the market during trading period $h$. This state of the market is not visible
to any agent. Instead, each agent $i$ has its own internal
belief state $B_{i,h}$ corresponding to its belief about the actual state $s_h$.
$B_{i,h}$ gives a probability distribution over the set of states
$S$, where $B_{i,h} = (b_{1,h},...,b_{|S|,h})$.
Consider trading period $h-1$ when the agents perform the joint action
$\overline{a}_{h-1}$. Because of this joint action of the agents
the environment stochastically changes to a new state
$s_{h}$, defined by the state transition function $T(s_{h-1}, a_{h-1}, s_h)$.
There is also an external information state,
$I_h$, that transitions to the next state
$I_{h+1}$ given by the Poisson distribution \cite{Maheu04} from
which the information signal is sampled.
The agent $i$ doesn't directly see the
environment state, but instead receives an
observation $o_{i,S_h} = (\pi_{s_h}, {\cal I}_{i,s_h})$, that
includes the market price $\pi_{s_h}$ corresponding
to the state $s_h$ as informed by the market maker,
and the information signal ${\cal I}_{i,s_h}$. The agent $i$ then
uses a \textit{belief update function} to
update its beliefs. Finally, agent $i$ selects an action
using an \textit {action selection strategy} and receives
a reward $R_{i,s_h}$.

\subsection{Trading agent belief update function}
\label{beliefupdate}
Recall from Section \ref{POSGI} that a belief state of a trading agent is a probability vector that
gives a distribution over the set of states $S$ in the prediction market, i.e.
$B_{i,h} = (b_{1,h},...,b_{|S|,h})$.
A trading agent uses its belief update function
$b: \Re^{\mid S \mid} \times A_i \times O_i \rightarrow \Re^{\mid S \mid}$
to update its belief state based on its past action $a_{i,h-1}$,
past belief state $B_{i,h-1}$ and the observation $o_{i,S_h}$.
The calculation of the belief update function
for each element of the belief state, $b_{s',h}$, $s' \in S$, is described below:
\begin{eqnarray}
b_{s',h} &=& P(s'|a_{i,h-1},o_i) = \frac{P(s',a_{i,h-1},o_i,)}{P(a_{i,h-1},o_i)} \nonumber \\
& = & \frac{P(o_i|s',a_{i,h-1}) \cdot P(s',a_{i,h-1})}{P(o_i|a_{i,h-1}) P(a_{i,h-1})}
\label{main_eqn}
\end{eqnarray}
Because $a_{i,h-1}$ is conditionally independent given $s'$ and  $o_i$ is conditionally independent given $a_{i,h-1}$, we can rewrite Equation \ref{main_eqn} as:
\begin{eqnarray}
& &b_{s',h} =  \frac{P(o_i|s') \cdot P(s',a_{i,h-1})}{P(o_i) P(a_{i,h-1})} = \nonumber \\
& & \frac{\sum_{\iota \in \cal I} P(\iota) P(o_i|s',\iota) \sum_{s \in S} P(s) P(s'|s,a_{i,h-1}) P(a_{i,h-1})} {P(o_i) P(a_{i,h-1})} \nonumber \\
& = & \frac{\sum_{\iota \in \cal I} P(\iota) P(o_i|s',\iota) \sum_{s \in S} P(s) P(s'|s,a_{i,h-1})} {P(o_i)} \nonumber \\
& = & \frac{\sum_{\iota \in \cal I} P(\iota)\Omega(s',\iota, o_i) \sum_{s \in S} T(s,a_{i,h-1},s') b_{s,h-1}}{P(o_i)}
\label{main_eqn1}
\end{eqnarray}

All the terms in the r.h.s. of the Equation \ref{main_eqn1} can be calculated by an agent: $P(\iota)$ is the
probability of receiving information signal $\iota$ and is available
from the Poisson distribution \cite{Maheu04} for the information arrival, $\Omega(s',\iota,o_i)$ is the probability of receiving observation $o_i$ in state $s'$ when the information signal is $\iota$, $T(s,a_{i,h-1},s')$ is the probability
that the state $s$ transitions to state $s'$ after agent $i$ takes action $a_{i,h-1}$, $ b_{s,h-1}$ is the past belief of agent $i$ about state $s$, $P(o_i)$ is the probability of receiving observation $o_i$, which is constant and can be viewed as a normalizing constant.

\begin{figure}[h]
\begin{center}
\includegraphics[width=3.6in]{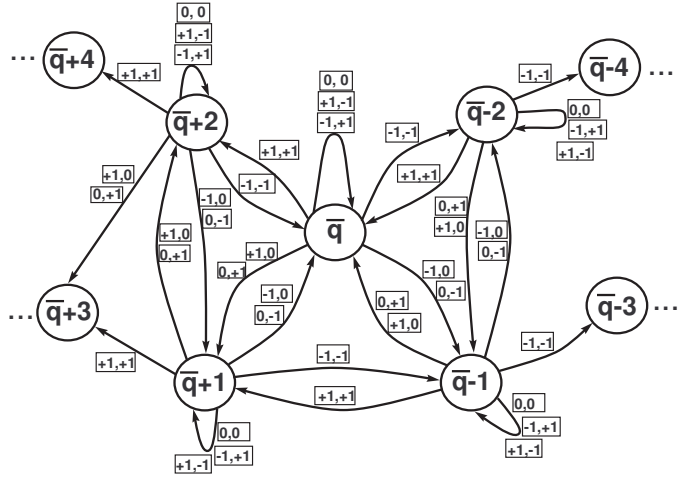}
\caption{Finite state automata of the environment
represented by the number of outstanding units of the
security, $\overline{q}$, in the prediction market. }
\label{automata_q}
\end{center}
\end{figure}

\begin{figure}[h]
\begin{center}
\includegraphics[width=3.5in]{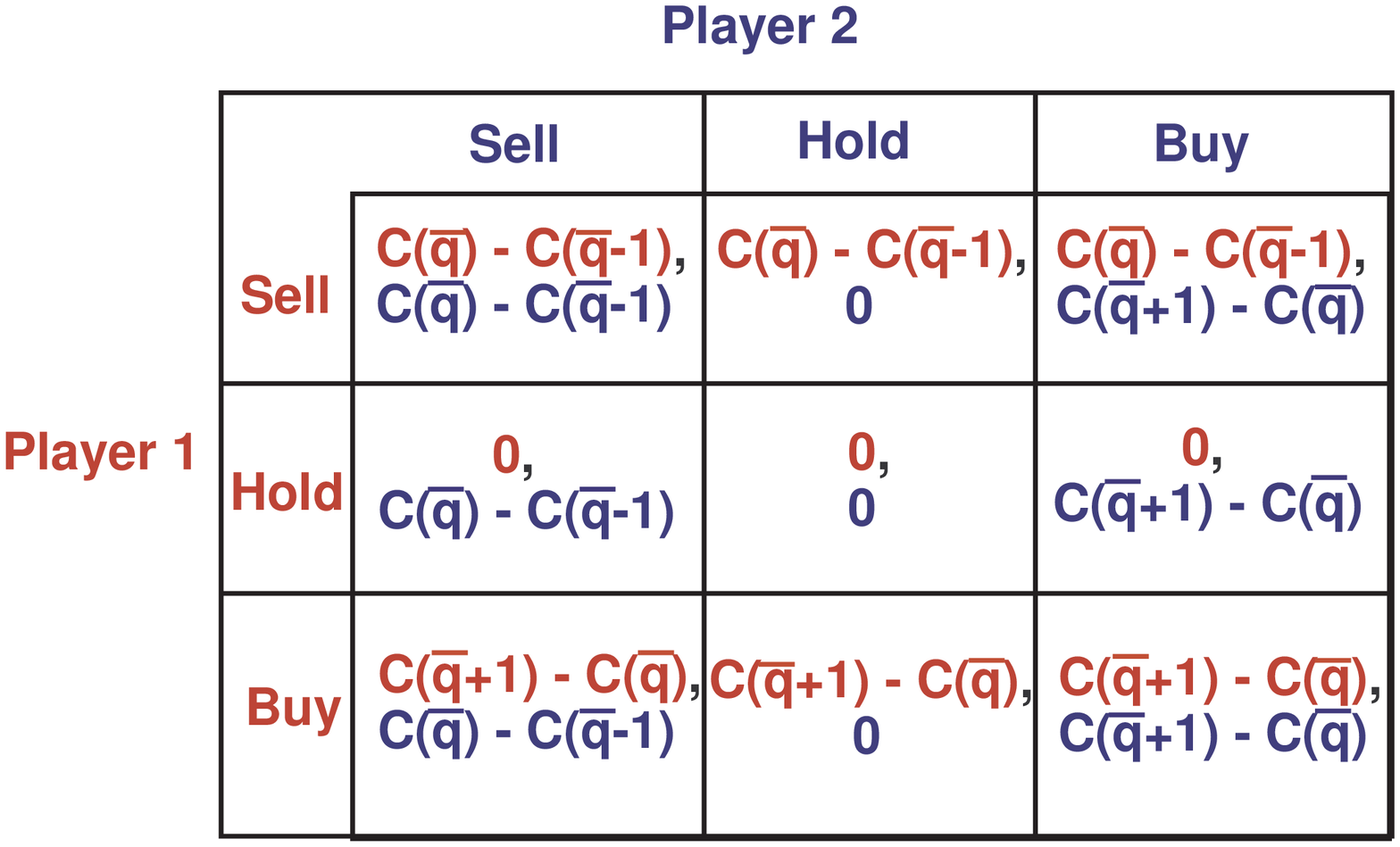}
\caption{Two player normal form game in a state for which the number
of outstanding shares is $\overline{q}$.}
\label{fig_2playergame}
\end{center}
\end{figure}

\subsection{Trading agent action selection strategy}

The objective of a trading agent in a prediction market
is to select an action at each stage so that the expected reward
that it receives is maximized. To understand this action selection process,
we consider the decision problem
facing each trading agent. Consider two agents whose
available actions during each time step are to buy (=+1) or sell(=-1) only
one unit of the security or not do anything (=0) by holding the security.
Let the market state be denoted by $\overline{q}$, the number of purchased units
of the security. Based on the set of actions available to
each agent, the state can transition to one of the following states
$\overline{q}+2$ (both agents buy), $\overline{q}+1$ (only one agent buys),
$\overline{q}$ (both agents hold, or, one agent buys while the other agent sells, resulting
in no transition), $\overline{q}+1$ (only agent sells), and
$\overline{q}+2$ (both agents sell), as shown in Figure \ref{automata_q}. We
can expand this state space further by adding more states and
transitions, but the number of states remains
finite because the set of states $S$ of the POSGI is finite. Also,
since the number of units of a security is finite, the number of
transitions from a state is guaranteed to be bounded.
We can construct a normal form game from this representation
to capture the decision problem for each agent. The payoff
matrix for this 2-player game at state $\overline{q}$ is shown in
Figure \ref{fig_2playergame}. From this payoff matrix,
we can use game-theoretic analysis
to determine equilibrium strategies for each trading agent
and find the actions that maximize its expected reward, as described below.

\subsection{Correlated Equilibrium (CE) calculation}
\label{sec_CE}
In the POSGI, the aggregated price information received by a trading
agent from the market maker can be treated as a recommendation
signal for selecting the agent's strategy. This situation lends
itself to a correlated equilibrium (CE) \cite{Aumann} where a trusted
external agent privately recommends a strategy to play to each player.
A CE is more preferred to the
Nash or Bayesian Nash equilibrium because it can lead to improved payoffs,
and it can be calculated using a linear program in time polynomial
in the number of agents and number of strategies.

Each agent $i$ has a finite set of
strategy profiles, $\Phi_i$ defined over its action space $A_i$.
The joint strategy space
is given by $\Phi = \prod_{i=1}^{|N|} \Phi_i$ and let
$\Phi_{-i} = \prod_{j \neq i} \Phi_j$.
Let $\phi \in \Phi$ denote a strategy profile
and $\phi_i$ denote player $i$'s component in $\phi$.
The utility received by each agent $i$ corresponds
to its payoff or reward and is given by the agent $i$'s
utility function $u_i( \phi)=R_i$ (for legibility
we have dropped state $k$, but the same calculation applies
at every state).
A correlated equilibrium
is a distribution $p$ on $\Phi$ such that for all
agents $i$ and all strategies $\phi_i, \phi_i'$
if all agents follow a strategy profile $\phi$ that
recommends player $i$ to choose strategy $\phi_i$,
agent $i$ has no incentive to play another
strategy $\phi_i'$ instead. This implies that
the following expression holds:
$\sum_{\phi_{-i} \in \Phi_{-i}} p(\phi) (u_{i}(\phi) - u_{i}(\phi_i',
\phi_{-i})) \geq 0$,  $\forall i \in N$, $\forall \phi_i, \phi_i' \in
\Phi_i$ and where $u_i(\phi'_i,\phi_{-i})$ is the utility that agent
$i$ gets when it changes its strategy to $\phi'_i$ while all the other agents keep their strategies fixed at
$\phi_{-i}$ and $p(\phi)$
is the probability of realizing a given strategy profile $\phi$.

\begin{theorem} A correlated equilibrium exists in our
POSGI-based prediction market representation
at each stage (trading period).
\end{theorem}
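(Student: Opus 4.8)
The plan is to reduce the stage game to a finite normal-form game and then appeal to the classical fact that every finite game admits a correlated equilibrium. First I would fix an arbitrary stage (trading period) with market state $\overline{q}$ and observe that, by the POSGI definition, the agent set $N$ is finite, each action space $A_i$ is finite, and the state space $S$ is finite; consequently each strategy set $\Phi_i$ defined over $A_i$ is finite and so is the joint strategy space $\Phi = \prod_{i} \Phi_i$. The payoff to agent $i$ under a profile $\phi$ is the well-defined number $u_i(\phi) = R_i$ computed from the LMSR reward $R_{i,k}$. Thus at this stage we have a genuine finite normal-form game, exactly the two-player instance illustrated in Figure \ref{fig_2playergame} together with its $|N|$-player generalization.

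Next I would recall the defining linear system of a correlated equilibrium: a distribution $p$ on $\Phi$ is a CE iff $p(\phi) \ge 0$ for all $\phi$, $\sum_{\phi \in \Phi} p(\phi) = 1$, and the incentive constraints $\sum_{\phi_{-i} \in \Phi_{-i}} p(\phi)(u_i(\phi) - u_i(\phi_i', \phi_{-i})) \ge 0$ hold for every agent $i$ and every pair $\phi_i, \phi_i' \in \Phi_i$. Existence of a CE is therefore equivalent to feasibility of this finite system of linear inequalities, and the key step is to exhibit a single feasible point. I would obtain it from a Nash equilibrium: by Nash's theorem the finite game has a (possibly mixed) Nash equilibrium $\sigma = (\sigma_i)_{i \in N}$, and I would take $p$ to be the induced product distribution $p(\phi) = \prod_{i \in N} \sigma_i(\phi_i)$. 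Under a product distribution each agent's conditional belief about the others upon being recommended $\phi_i$ is simply $\prod_{j \neq i} \sigma_j$, so the CE incentive constraints collapse to the Nash best-response conditions that $\sigma$ already satisfies. Hence $p$ is feasible, a CE exists at the stage, and since the stage was arbitrary a CE exists at every trading period.

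The main obstacle is not the existence argument itself, since the inclusion of Nash equilibria in the set of correlated equilibria is standard, but rather justifying the reduction cleanly. I would need to argue that partial observability and the information signal do not obstruct the stage-game formulation: at a fixed stage the recommendation signal (the aggregated market price from the market maker) plays precisely the role of Aumann's correlating device, and the per-stage payoffs $u_i(\phi) = R_i$ are deterministic functions of the joint strategy profile, so the game is well-posed as a complete-information normal-form game at that state. Once this is established the remaining steps are routine. As an alternative to invoking Nash's theorem, one could prove feasibility directly through a minimax/duality (Hart--Schmeidler) argument, which yields the CE via the very linear program the paper already uses for computation; I would mention this route since it ties the existence proof to the polynomial-time LP described in Section \ref{sec_CE}.
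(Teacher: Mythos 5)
Your proof is correct, but it takes a genuinely different route from the paper's. You reduce the stage interaction to a finite normal-form game, invoke Nash's theorem to obtain a mixed equilibrium $\sigma$, and then use the standard inclusion of Nash equilibria in correlated equilibria: the product distribution $p(\phi)=\prod_i \sigma_i(\phi_i)$ satisfies the CE incentive constraints because, on the support of $\sigma_i$, each recommended $\phi_i$ is already a best response to $\sigma_{-i}$. That argument is sound (your handling of the reduction --- fixing a state so that the LMSR payoffs are deterministic functions of the joint actions, exactly as in the paper's Figure \ref{fig_2playergame} --- is the right thing to check), and it is the shortest classical path to existence. The paper instead never touches Nash's theorem: it writes the CE conditions as a linear program with objective $\max \sum_{\phi} p(\phi)$, cites Hoenselaar's observation that a CE exists if and only if this program is unbounded, and establishes unboundedness by showing (via Papadimitriou and Roughgarden) that the dual program is infeasible. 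The trade-off is essentially the one you flagged yourself in your closing remark: your route leans on a fixed-point theorem (Brouwer/Kakutani, inside Nash's theorem) and is non-constructive, whereas the paper's duality route --- in the spirit of Hart--Schmeidler --- avoids fixed points altogether and directly certifies feasibility of the very linear program that Algorithm \ref{CEalg} solves, so existence and polynomial-time computability come out of the same argument. Your approach buys brevity and familiarity; the paper's buys constructiveness and a tighter link to the computational machinery the agents actually use.
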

\begin{proof}
At each stage in our prediction market, we can specify the correlated
equilibrium by means of linear constraints as given below:\\
\begin{equation}
\sum_{\phi_{-i} \in \Phi_{-i}} p(\phi) (u_{i}(\phi) - u_{i}(\phi_i', \phi_{-i})) \geq 0,  \forall i \in N, \forall \phi_i, \phi_i' \in \Phi_i
\label{CE}
\end{equation}
\begin{equation}
\sum_{\phi \in \Phi} p(\phi) = 1,
\label{CE_constraint1}
\end{equation}
\begin{equation}
p(\phi) \geq 0
\label{CE_constraint2}
\end{equation}
Equation \ref{CE} states that when agent $i$ is recommended to select strategy $\phi_i$, it must get no less utility from selecting strategy $\phi_i$ as it would from selecting any other strategy $\phi_i'$.
Constraints \ref{CE_constraint1} and \ref{CE_constraint2} guarantee that $p$ is a valid probability distribution.
We can rewrite the linear program specification of the correlated equilibrium above by adding an objective function to it.
\begin{equation}
\mbox{max } \sum_{\phi \in \Phi} p(\phi), \mbox{ or min }  -\sum_{\phi \in \Phi} p(\phi)  \mbox{ s.t.}
\label{CE_alternative}
\end{equation}
\begin{equation}
\sum_{\phi \in \Phi, \phi_{-i} \in \Phi_{-i}} p(\phi) (u_{i}(\phi) - u_{i}(\phi_i', \phi_{-i})) \geq 0,
\label{CE_alternative_constraint1}
\end{equation}
\begin{equation}
\phi \geq 0
\label{CE_alternative_contraint2}
\end{equation}

Program \ref{CE_alternative_constraint1} is either trivial with a
maximum of $0$ or unbounded. Hoenselaar \cite{Hoenselaar} proved that there
is a correlated equilibrium if and only if Program \ref{CE_alternative_constraint1}
is unbounded. To prove the unboundedness we consider the dual problem of
Equation \ref{CE_alternative_constraint1} given in Equation \ref{CE_dual}.
\begin{equation}
\mbox{max } 0, \mbox{ s.t.}
\label{CE_dual0}
\end{equation}
\begin{equation}
\sum_{\phi \in \Phi, \phi_{-i} \in \Phi_{-i}} p(\overline{\phi}) [(u_{i}(\phi) - u_{i}(\phi_i', \phi_{-i})]^T \leq -1
\label{CE_dual}
\end{equation}
\begin{equation}
\overline{\phi} \geq 0
\label{CE_dual_constraint}
\end{equation}
where for every $p(\overline{\phi})$ there is $p(\phi)$ such that $p(\phi) [(u_{i}(\phi) - u_{i}(\phi_i', \phi_{-i})] p(\overline{\phi}) = 0$.

Also in \cite{Papa08}, the authors showed that the problem given in Equation
\ref{CE_dual} is always infeasible.
From operations research we know that when the dual problem
is infeasible the primal problem is feasible and unbounded.
This means that the primal problem from Equation
\ref{CE_alternative_constraint1} is always unbounded.
We can then conclude that there is at least one
correlated equilibrium in every trading period of the prediction market.
\end{proof}

\omitit{We have adapted the CE calculation algorithm \cite{Papa08}
to calculate the correlated equilibria in our prediction market. This
algorithm has a worst case running time polynomial in the number of strategies.}

The CE calculation algorithm used by the trading agents in our
agent-based prediction market is shown in Algorithm \ref{CEalg}.
The calculation of the matrix values of the $U$ matrix must be done once for each
of the $N$ agents. Using the ellipsoid algorithm \cite{Papa08}
the computation of the utility difference $u_i(\phi) - u_i(\phi_i,\phi_{-i})$
for each agent $i$ can be done in $|\Phi_i|^2$ time. Therefore, the time
complexity of the CECalc algorithm during each trading period comes to
$N \times |\Phi_i|^2$.

\begin{algorithm}[]
\textbf{CECalc}($D,\Phi$) \\
\KwIn{$D, \Phi$ //$D$ is the duration of the market, $\Phi$ is the set of strategies }
\KwOut{$p$ //correlated equilibrium }
\ForEach{$t \leftarrow 0$ to $D$ }{//do this in each trading period\\
Let $U$ be the matrix  consisting of the values of $(u_i(\phi) - u_i(\phi_i, \phi_{-i}))$, $\forall i \in N, \phi \in \Phi, \phi_{-i} \in \Phi_i$ \\
    $p_t' \leftarrow$ getDualDistribution($\Phi,U$);\\
    $p_t \leftarrow$ solve for $p_t$ s.t. $p_t U^T p_t' = 0$;\\
	return $p_t$;\\
}
$ $\\
\textbf{GetDualDistribution}($\Phi,U$)\\
\KwIn{$\Phi,U$}
\KwOut{$\Delta$}
$l = 0;$\\
$p'_l \in [0,1]$;\\
$\Delta = \{ \}$;\\
\While{$U^T \cdot p_l' \leq -1$ is feasible}{
$\Delta = \Delta +p_l';$\\
$p_{l+1}' = p_l + \epsilon^N$;  //increase all elements of $p'$ by some small amounts from vector $\epsilon^N$\\
$l++;$\\
}
return $\Delta$;
\caption{Correlated Equilibrium Algorithm}
\label{CEalg}
\end{algorithm}

\subsection{Correlated Equilibrium with Trading Agents'
Risk Preferences}
\label{sec_riskpref}
Incorporating the risk preferences of the trading agents
is an important factor in prediction markets.
For example, the erroneous result related to the non-correlation
between the trader beliefs and market prices in a prediction
market in \cite{Manski06} was because the risk preferences of
the traders were not accounted for, as noted in \cite{Gjerstad06}.
This problem is particularly relevant for risk averse
traders because the beliefs(prices) and risk preferences
of traders have been reported to be directly correlated \cite{Dimitrov10,Kadane}.
In this section, we examine CE that provides truthful
revelation incentives in the prediction market where the agent
population has different risk preferences.
The risk preference of an agent $i$ is modeled through a utility
function called the constant relative risk aversion (CRRA).
We use CRRA utility function to model risk averse agents because
it allows to model the effect of different levels of risk aversion and it
has been shown to be a better model than alternative families of risk modeling utilities \cite{Wakker}.
It has been widely used for modeling
risk aversion in various domains including economic domain \cite{Holt}, psychology \cite{Luce} and in the health domain \cite{Bleichrodt}.
The CRRA utility function is given below:
\begin{eqnarray}
 u_i(R_i) & = & \frac{R_i^{1-\theta_i}}{1-\theta_i}, \,\, \mbox{if}\, \theta_i \neq 1
 \nonumber \\
& = & ln (R_i), \,\, \mbox{if}\, \theta_i = 1
 \nonumber
\end{eqnarray}
Here, $-1<\theta_i < 1$ is called the risk preference factor of agent
$i$ and $R_i$ is the payoff or reward to agent $i$ calculated
using the LMSR \cite{Chen07}. When $R_i < 0$, we
may get a utility in the form of a complex number. In that case,
we convert the complex number to a real number by calling an existing
function that uses magnitude and the angle of the complex number for conversion.
For a risk
neutral agent, $\theta_i=0$, which makes $u_i = R_i$, as we have
assumed for our CE analysis of the risk neutral agent in
Section \ref{sec_CE}. For risk averse agents, $\theta_i >0$.
This makes the CRRA utility function concave. But the
concave structure of the utility $u_{i}$ for a risk
averse agent does not affect the \textit{existence} of at least
one correlated equilibrium because the unboundedness of
Equation \ref{CE_alternative_constraint1}
is not affected by the concave structure of $u_i$.
However, the equilibrium obtained with the risk averse
utility function can be different from the one obtained
with the risk neutral utility function - because the
best response $p(\phi)$ to $p(\phi_{-i})$
calculated with the risk neutral utility function might
not remain the best response when the agents are risk averse
and use the risk averse utility function.
To find a correlated equilibrium $p(\phi)$
in the market with risk averse agents, we first characterize
the set of all Pareto optimal strategy profiles.
A strategy profile $\phi^P$ is Pareto optimal if there does not exist another
strategy profile $\phi'$ such that $u_i(\phi') \geq u_i(\phi^P)$
$\forall i \in N$ with at least one inequality strict.
In other words, a Pareto optimal strategy profile is one such that
no trader could be made better off without making someone else worse off.
A Pareto optimal strategy profile can be found by maximizing weighted utilities
\begin{equation}
max_{\phi} \sum_{i=1}^{|N|} \lambda_i u_i(\phi) \mbox{ for some } \lambda_i
\label{max}
\end{equation}
Setting $\lambda_i=1$ for all $i \in N$ gives
a utilitarian social welfare function. The maximization problem in Equation \ref{max} can be solved
using the Lagrangian method. We get the following system of $|N|$ equations:
\begin{equation}
\sum_{i=1}^{|N|} \lambda_i \frac{u_i(\phi)}{\phi_i} = 0 \mbox{ , } \forall j=1,...,|N|
\label{FOC}
\end{equation} that must hold at $\phi^P$.
Each of these equations is obtained by taking a partial
derivative of the respective agent's weighted utility with respect to respective agent's strategy profile,
thus solving the maximization problem given in Equation \ref{max}.
By solving the system of equations \ref{FOC} we get the set of Pareto optimal
strategy profiles.
To determine if $p(\phi)$ is a correlated equilibrium we simply check whether it satisfies
correlated equilibrium constraints for Pareto optimal strategy profile given below:\\
$\sum_{\phi_{-i} \in \Phi_{-i}} p(\phi) (u_{i}(\phi) - u_{i}(\phi_i', \phi_{-i})) \geq 0,  \forall i \in N, \forall \phi_i, \phi_i' \in \Phi_i$

\begin{proposition}
If $p$ is a correlated equilibrium and $\phi^P$ is a Pareto optimal
strategy profile calculated by $p$ in a prediction market with
risk averse agents,
then the strategy profile $\phi^P$ is incentive compatible, that is
each agent is best off reporting truthfully.
\end{proposition}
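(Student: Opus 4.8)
The plan is to split incentive compatibility into two independent pieces and dispatch each with a tool already developed in the paper. Writing $\phi_i^{\mathrm{truth}}$ for the strategy under which agent $i$ reports its belief state $B_{i,h}$ truthfully to the market maker, the statement that $\phi^P$ is incentive compatible amounts to two claims: (i) given the recommendation carried by $\phi^P$, no agent can raise its utility by switching to a different (in particular, a non-truthful) strategy; and (ii) the strategy that $\phi^P$ recommends to each agent is in fact $\phi_i^{\mathrm{truth}}$. Claim (i) is immediate from the correlated-equilibrium constraint displayed just before the Proposition: since $p$ is a correlated equilibrium, $\sum_{\phi_{-i}\in\Phi_{-i}} p(\phi)(u_i(\phi)-u_i(\phi_i',\phi_{-i}))\ge 0$ holds for every $i$ and every $\phi_i'$, and because this constraint is written in terms of the CRRA utilities $u_i$ themselves, it already encodes the agents' risk aversion. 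Thus the whole content of the Proposition is concentrated in claim (ii).

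First I would establish that truthful reporting maximizes each agent's expected reward $R_i$; this is exactly the properness of the LMSR scoring rule used by the market maker in Section \ref{sec_marketmaker} \cite{Hanson07}, under which the expected payoff of agent $i$, taken with respect to its own belief, is maximized by reporting that belief. Next I would transfer this statement from rewards to utilities. Because each CRRA utility $u_i(R_i)$ is strictly increasing in $R_i$ on its domain, a coordinate-wise strictly increasing map preserves the Pareto order on realized payoff vectors: a profile is Pareto optimal in $(u_1,\dots,u_{|N|})$ exactly when the associated reward vector $(R_1,\dots,R_{|N|})$ is Pareto optimal. I would then argue that the truthful profile lies on this frontier, since any deviation that strictly raises one agent's reward under a proper scoring rule must, through the shared aggregated price, lower some other agent's reward. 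Consequently the Pareto optimal profile $\phi^P$ obtained by maximizing the weighted welfare in Equation \ref{max} can be taken to be the truthful profile, with the first-order conditions of Equation \ref{FOC} holding precisely there. Combining (i) and (ii) then yields that at $\phi^P$ each agent is best off reporting truthfully.

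The hard part will be step (ii) for genuinely risk-averse agents ($\theta_i>0$), where the interaction between concavity and the expectation over the uncertain outcome is the real difficulty: such an agent maximizes $E[u_i(R_i)]$, not $u_i(E[R_i])$, so by Jensen's inequality it would, absent coordination, prefer a hedged non-truthful report to its true belief, which is exactly the failure of truthfulness that motivates Section \ref{sec_riskpref}. The resolution I would pursue is that the correlated recommendation, namely the aggregated market price, supplies precisely the coordinated risk-sharing that removes this hedging incentive: at the Pareto optimal correlated equilibrium the risk is allocated efficiently across agents, so any unilateral misreport intended to lower an agent's own exposure would render the realized allocation inefficient and hence violate either Pareto optimality or the correlated-equilibrium inequality. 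Making this coordination argument quantitative, by showing that the Pareto weights $\lambda_i$ can be chosen so that the conditions of Equation \ref{FOC} are met at the truthful profile simultaneously with the correlated-equilibrium constraints, is where the bulk of the technical work lies.
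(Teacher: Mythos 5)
Your claim (i) is unobjectionable --- it is just the correlated-equilibrium inequality restated --- but you yourself locate the entire content of the proposition in claim (ii), that the recommended profile is the truthful one, and that claim is never actually established: your final sentence defers ``the bulk of the technical work'' of making the risk-sharing argument quantitative. That deferred step is not a technicality; it is the theorem. Worse, the links you propose to chain together fail precisely in the risk-averse regime the proposition addresses. Properness of the LMSR \cite{Hanson07} says truthful reporting maximizes the agent's \emph{expected reward} $\mathbb{E}[R_i]$; a risk-averse agent with $\theta_i>0$ maximizes $\mathbb{E}[u_i(R_i)]$, and as you concede via Jensen's inequality these objectives diverge, so the first link breaks. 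The monotone-transformation argument (Pareto optimality in utilities iff Pareto optimality in rewards) is valid only for deterministic, realized payoff vectors; under uncertainty the expected-utility Pareto frontier is not the image of the expected-reward frontier, again exactly because of concavity. And the claim that any deviation raising one trader's reward must lower another's presumes the traders' interaction is constant-sum, which is false under an LMSR market maker: the market maker can itself absorb losses (the liquidity parameter $b$ exists precisely to bound that subsidy), so reward need not be conserved among traders.

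For contrast, the paper's own proof takes a far more modest route that never touches scoring-rule properness or truthful belief reports. It reads ``not incentive compatible'' as the existence of a profile $\phi'$ with $u_i(\phi') \geq u_i(\phi^P)$ (Equation \ref{contradiction}) and derives a contradiction twice: once directly from the definition of Pareto optimality of $\phi^P$, and once by rewriting the inequality as $u_i(\phi^P)-u_i(\phi') \leq 0$, multiplying by $p(\phi^P)$, and contradicting the correlated-equilibrium constraint. In effect the paper identifies ``deviating from the recommended profile'' with ``reporting untruthfully,'' so it only ever proves your claim (i) together with the Pareto property. Your decomposition is arguably a more honest reading of ``each agent is best off reporting truthfully,'' but that reading demands exactly the properness-under-risk-aversion machinery that does not survive concave utilities, and nothing in the paper supplies a substitute. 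As it stands, your proposal is a plan with its central step open, not a proof.
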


\begin{proof}
We prove by contradiction. Suppose that $\phi^P$ is not an incentive compatible strategy, that is, there is some other $\phi'$ for which
\begin{equation}
u_i(\phi') \geq u_i(\phi^P)
\label{contradiction}
\end{equation}
Equation \ref{contradiction} violates two properties of $\phi^P$. First of all, since $\phi^P$ is Pareto optimal,
 we know that Equation \ref{contradiction} is not true, since $u_i(\phi^P) \geq u_i(\phi')$ by the definition
 of Pareto optimal strategy profile. Secondly, if we rewrite Equation \ref{contradiction} as
 $u_i(\phi^P) - u_i(\phi') \leq 0$ and multiply both sides by $p(\phi^P)$, we get
 $p(\phi^P) [u_i(\phi^P) - u_i(\phi')] \leq 0$. Since
 $p$ is a correlated equilibrium this inequality can not hold, otherwise it would violate
 the definition of the correlated equilibrium.
\end{proof}

\section{Experimental Results}
We have conducted several simulations using our POSGI prediction market.
The main objective
of our simulations is to test whether there is a benefit to the agents to follow
the correlated equilibrium strategy.
We do this by analyzing
the utilities of the agents and the market price.
The default values for the statistical distributions
for market related parameters such as the number
of days over which the market runs representing the duration
of an event and the information signal
arrival rate  were taken from data obtained from the Iowa Electronic
Marketplace(IEM) movie market for the event \textit{Monsters Inc.}
\cite{IEMURL} and are shown in Table \ref{table_sim_parms}.
\begin{table}[t]
\begin{center}
\begin{tabular}{|l|l|}
\hline
\textbf{Name} & \textbf{Value}\\
\hline
No. of days  & $50$\\
\hline
Duration of $1$ day & $1 sec$\\
\hline
No. of agents & $2$\\
\hline
No. of events & $1$\\
\hline
Max. allowed buy/sell quantity of security & $1$\\
\hline
Price of security & $[0,1]$\\
\hline
\end{tabular}
\caption{Parameters used for our simulation experiments}
\label{table_sim_parms}
\end{center}
\end{table}

For all of our simulations, we assume there is one event in the market with two
outcomes (positive or negative); each outcome has one security associated with it.
We consider events that are disjoint (non-combinatorial). This allows us to compare our proposed strategy empirically with other existing
strategies while using real data collected from  the IEM, which also considers
non-combinatorial events. Since we consider disjoint events,
having one event vs. multiple events does not change the strategic behavior of the agent or the results
of each agent{\footnote{We have verified positively that our system can scale effectively with the number of events and agents.}}.
We report the market price for the security
corresponding to the outcome of the event being $1$ (event occurs).

We compare the trading agents' and market's behavior under various strategies employed by the agents.
We use the following five  well-known strategies for comparison \cite{Ma08}
which are distributed uniformly over the agents in the experiments.
\begin{enumerate}
\item ZI (Zero Intelligence) - each agent submits randomly
calculated quantity to buy or sell.
\item ZIP (Zero Intelligence Plus) - each agent selects a quantity to buy or sell that
satisfies a particular level of profit by adopting its profit margin
based on past prices.
\item CP (by Preist and Tol) - each agent adjusts its quantity to buy or sell
based on past prices and tries to choose that quantity so that it is competitive among other agents.
\item GD (by Gjerstad and Dickhaut) - each agent maintains a
history of past transactions and chooses the quantity to buy or sell that
maximizes its expected utility.
\item DP (Dynamic Programming solution for POSG game) - each
agent uses dynamic programming solution to find the best
quantity to buy or sell that maximizes its expected utility given past prices,
past utility, past belief and the information signal
\cite{Hansen04}.
\item CE (Correlated equilibrium solution) - each agent uses
Algorithm \ref{CEalg} from Section \ref{sec_CE} to determine a correlated
equilibrium strategy that gives a recommended quantity to buy or sell.
\end{enumerate}

\begin{figure}[h]
\begin{center}
\begin{tabular}{lll}
\hspace{-0.2in}
\includegraphics[width=2.0in]{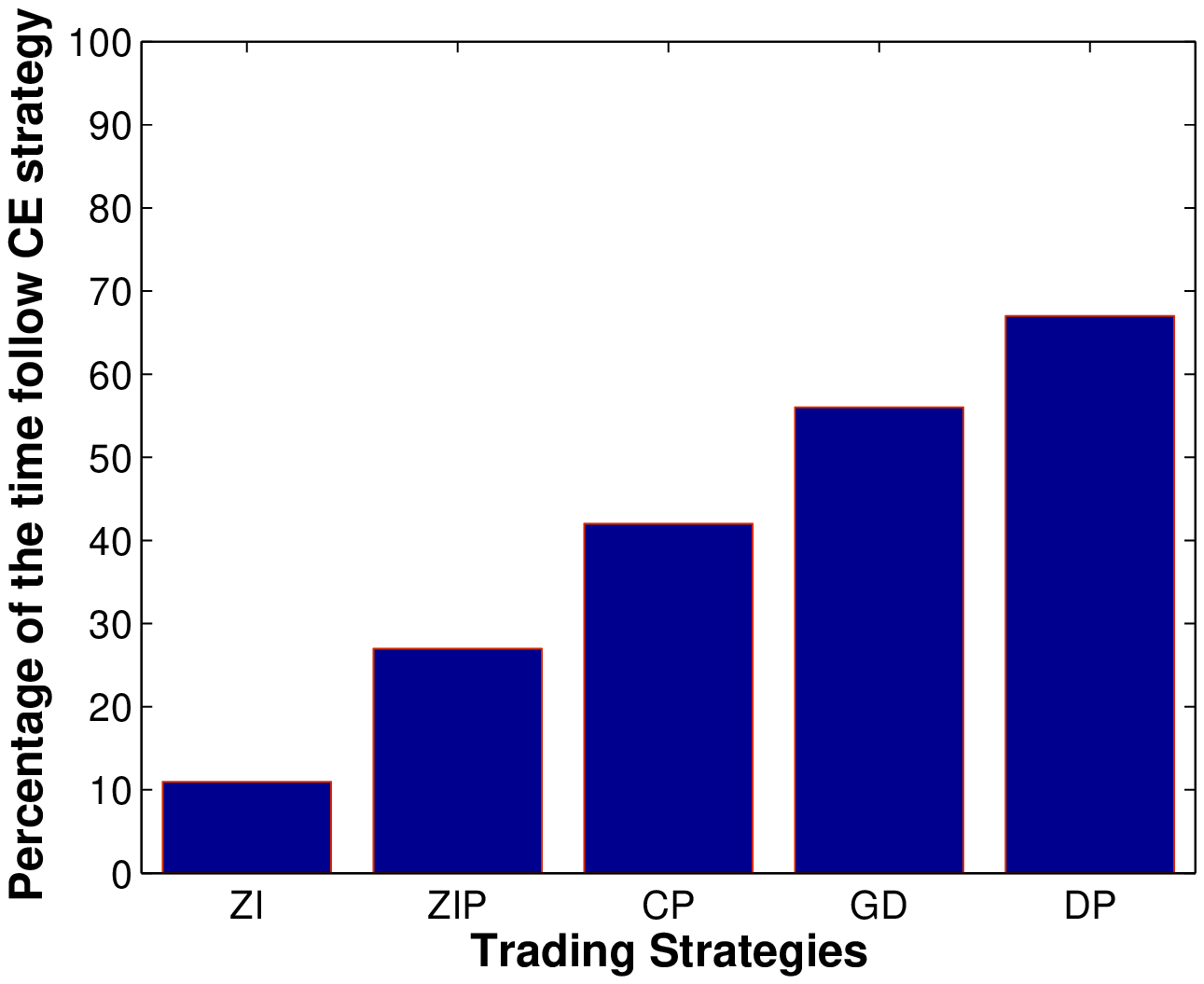}
\hspace{-0.2in}
&
\hspace{-0.2in}
\includegraphics[width=2.0in]{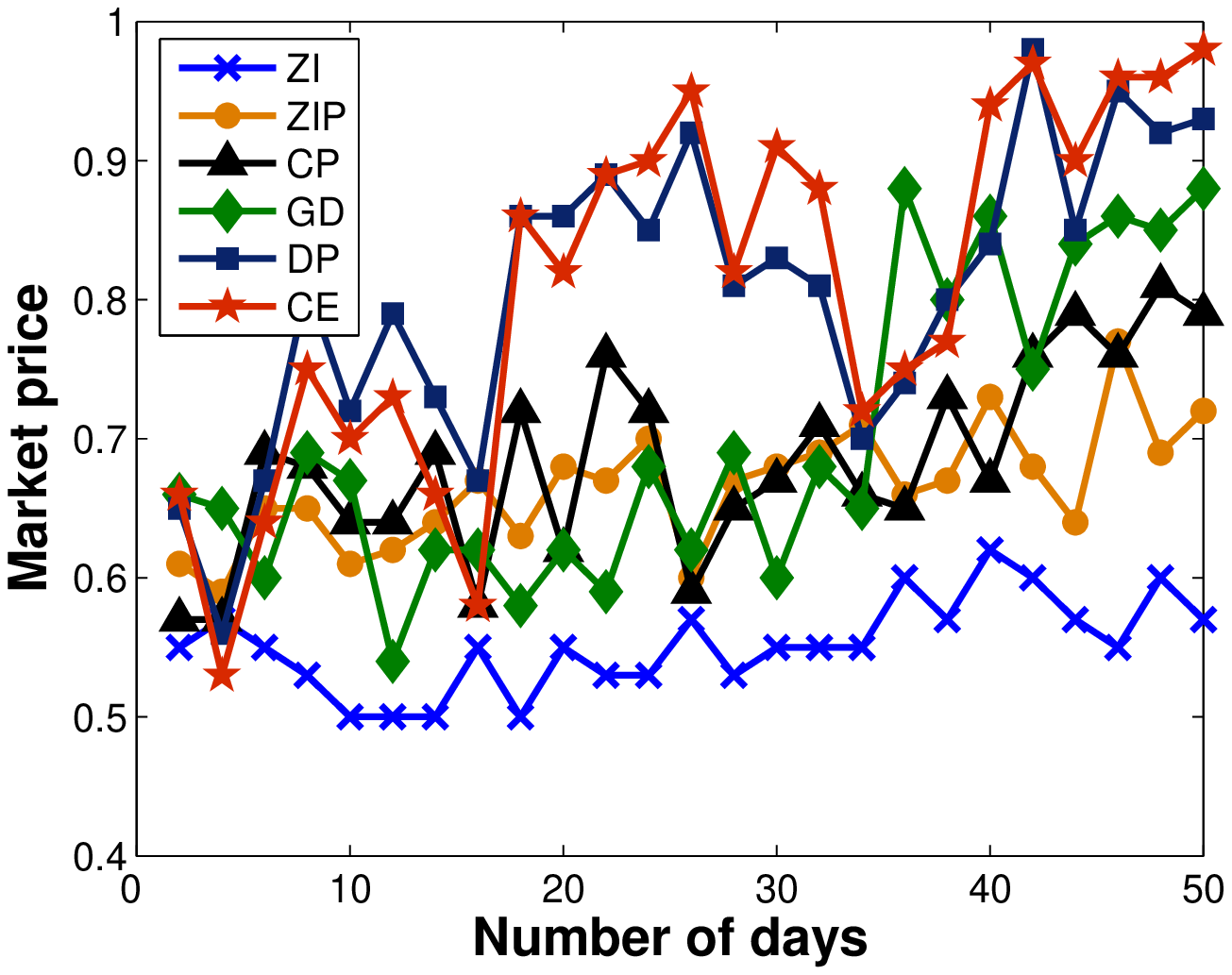}
\hspace{-0.2in}
&
\hspace{-0.2in}
\includegraphics[width=2.0in]{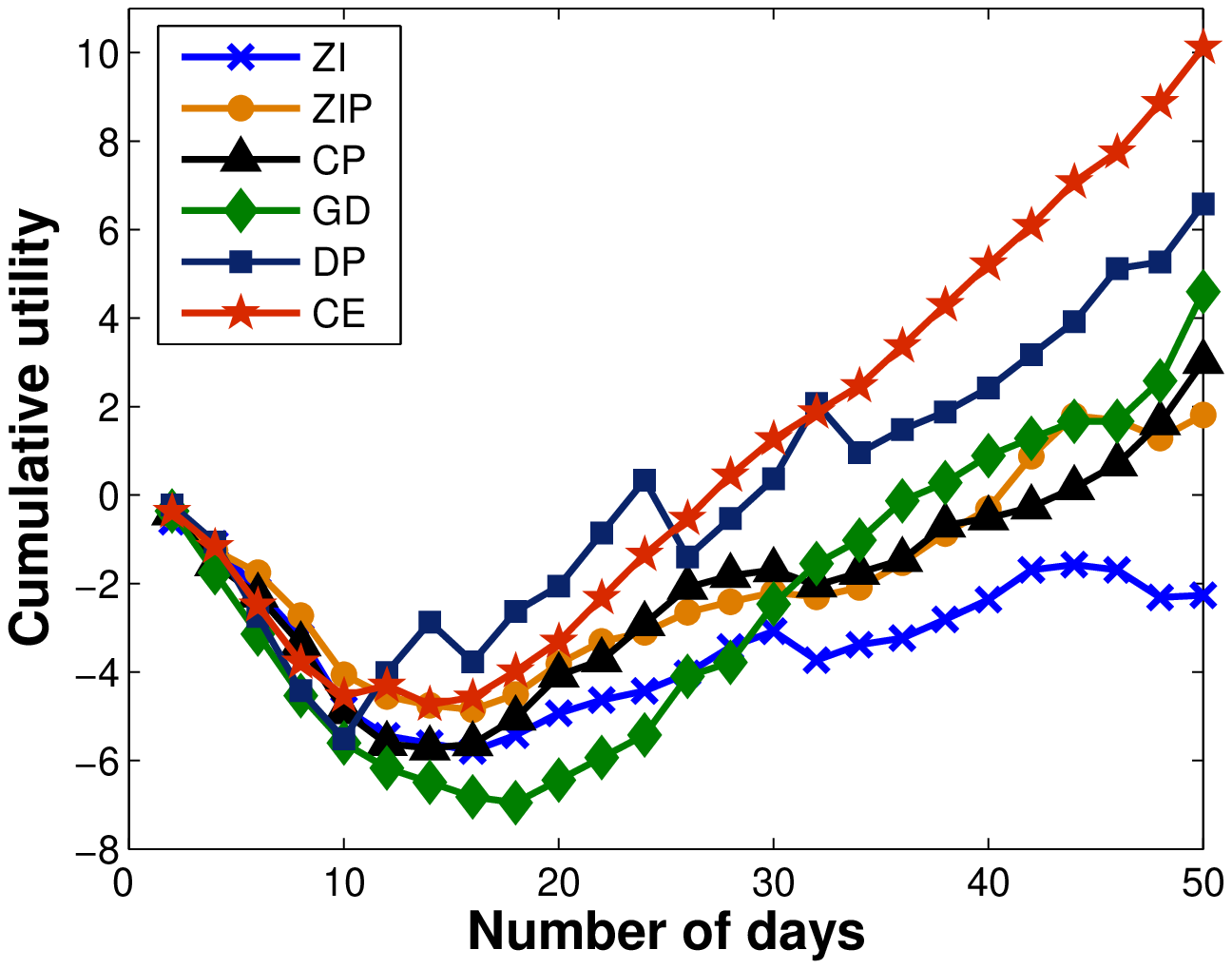}
\hspace{-0.2in}\\
\hspace{0.8in}a & \hspace{0.8in}b & \hspace{0.8in}c
\end{tabular}
\caption{(a) Percentage of the number of days each compared trading strategy follows a correlated equilibrium strategy, (b) Market Prices, and, (c) Utilities of the risk neutral agent while using different trading strategies.}
\label{price}
\end{center}
\end{figure}

\subsection{Risk-neutral Agents}
For our first group of experiments we assume that the agents are risk neutral.
First, we attempt to understand the
correspondence between our CE strategy and each of the other
compared strategies.
For comparing the CE strategy with each strategy, we ran the simulations with
identical settings, once with agents using the CE strategy for
making trading decisions and then with the same agents using the
compared strategy for making trading decisions.
Figure \ref{price}(a) shows the number of times(days) an action
recommended to a trading agent by the CE strategy is the same
as the action recommended to a trading agent by one of the compared
strategies, expressed as a percentage.
We observe that the action recommended to a trading agent
employing the ZI strategy is the same as the action recommended to a trading
agent using CE strategy only $11\%$ of the time, while for the trading agent using
 the more refined DP  strategy it is the same $68\%$ of the time. The higher percentage
of adoption of the CE strategy by the more refined strategies indicates
that the DP strategy is more sophisticated than simpler strategies.

In our next set of experiments, we compare the market prices and utilities obtained by a trading agent over time (days) while using different trading strategies. In each simulation run, the trading agent uses one of the compared strategies for determining its action at each time step (day). We compare the cumulative market prices and utilities for each such simulation.
We report a scenario where the event in the prediction market happens and show the
market prices and utilities for the security corresponding to the positive outcome of the event.
In this scenario, the market price should approach $\$1$ (event happens) as the prediction market's duration nears end.
Figure \ref{price}(b) shows the market prices of the orders placed
by the trading agents during the duration of the event for different
strategies.  We observe that agents using the CE strategy perform better
since they are able to
trade at prices that are closer to $\$1$,
indicating that agents using the CE strategy are able to respond
to other agents' strategies and predict the aggregated price of the
security more efficiently. This efficiency is further supported by the graph in
Figure \ref{price}(c) that shows the utility of the agents
while using different strategies. We see that the agents using
the CE strategy are finally able to obtain $38\%$ more utility
than the agents following the next best performing strategy (DP),
and $85\%$ more utility than the agents following the second worst
performing strategy (ZIP).

\begin{figure}[h]
\begin{center}
\begin{tabular}{ll}
\hspace{-0.15in}
\includegraphics[width=2.5in]{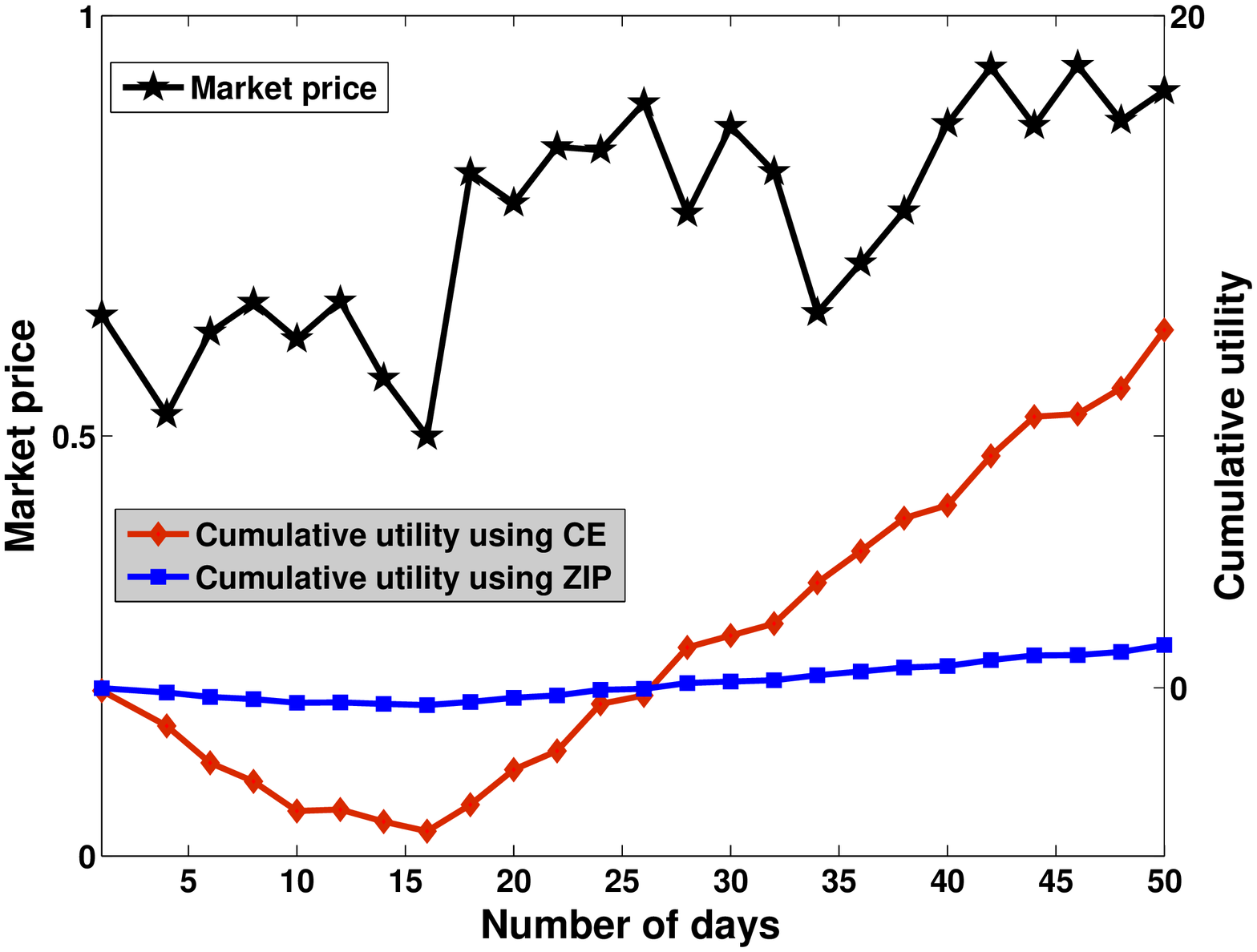}
\hspace{-0.15in}
&
\hspace{-0.15in}
\includegraphics[width=2.7in]{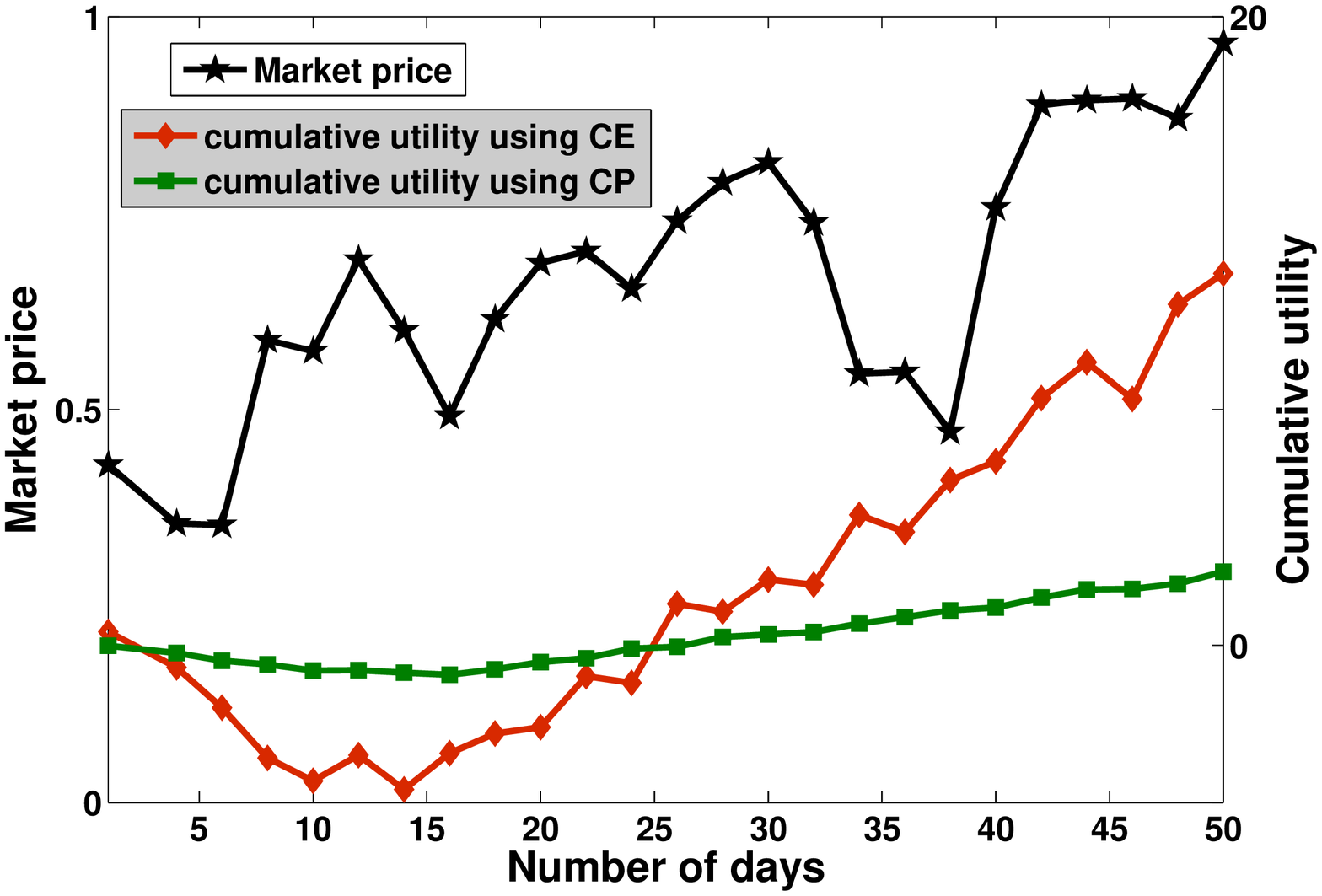}
\hspace{-0.15in}\\
\hspace{0.8in}a & \hspace{0.8in}b 
\end{tabular}
\caption{Comparison of the cumulative market prices and utilities obtained by: (a) CE and ZIP strategies, and, (b) CE and CP strategies.}
\label{CEvs1}
\end{center}
\end{figure}

\begin{figure}[h]
\begin{center}
\begin{tabular}{ll}
\hspace{-0.15in}
\includegraphics[width=2.5in]{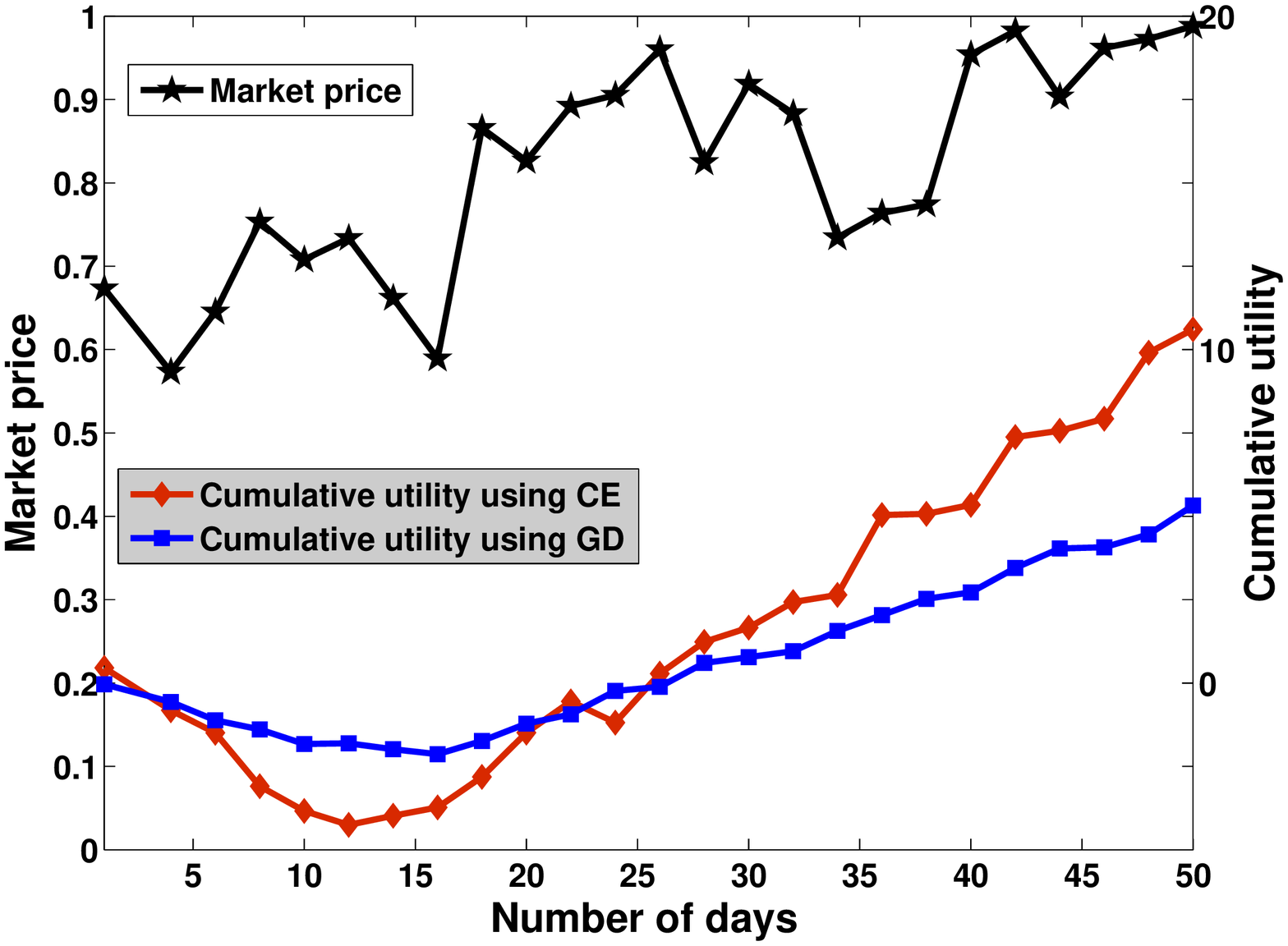}
\hspace{-0.15in}
&
\hspace{-0.15in}
\includegraphics[width=2.7in]{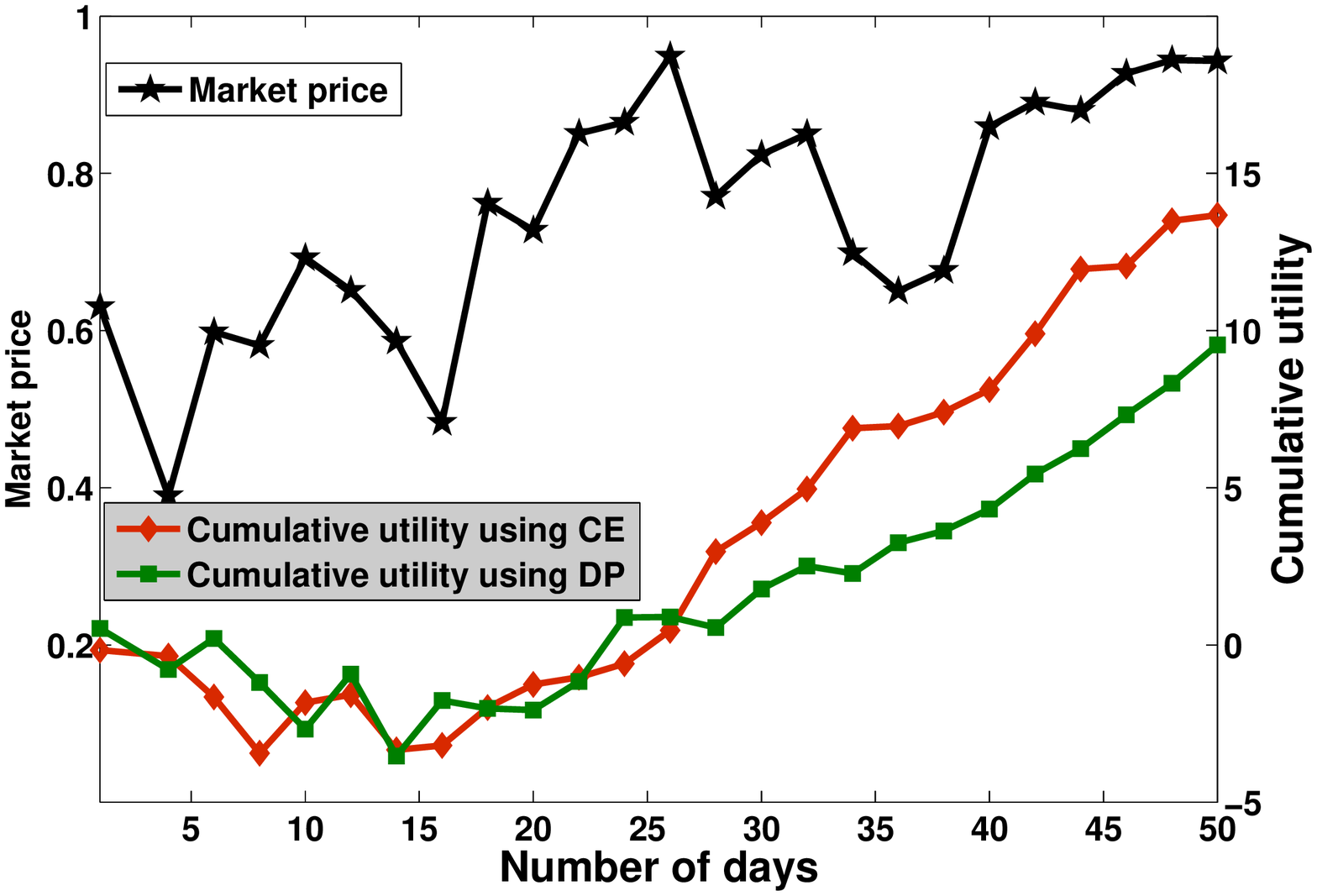}
\hspace{-0.15in}\\
\hspace{0.8in}a & \hspace{0.8in}b 
\end{tabular}
\caption{Comparison of the cumulative market prices and utilities obtained by: (a) CE and GD strategies, and, (b) CE and DP strategies.}
\label{CEvs2}
\end{center}
\end{figure}

For our following set of experiments, we further compare the CE strategy with
other strategies by using two agents in the same simulation run - one agent using CE strategy
and the other agent using the compared strategy.
The results are reported in Figure \ref{CEvs1} and \ref{CEvs2}.
We can see that in each scenario the CE strategy outperforms other strategies. From Figures \ref{CEvs1}(a) and \ref{CEvs2}(b) respectively, we observe that the trading agent using CE strategy gets $95\%$ more utility than the trading agent using ZIP strategy and $41\%$ more utility than the trading agent using DP strategy.
Overall, we can say that using the POSGI
model allows the agent to avoid myopically predicting prices and use
the correlated equilibrium to calculate prices more accurately
and obtain higher utilities.

\subsection{Risk averse Agents}
\begin{figure}[h]
\begin{center}
\begin{tabular}{lll}
\hspace{-0.15in}
\includegraphics[width=2.0in]{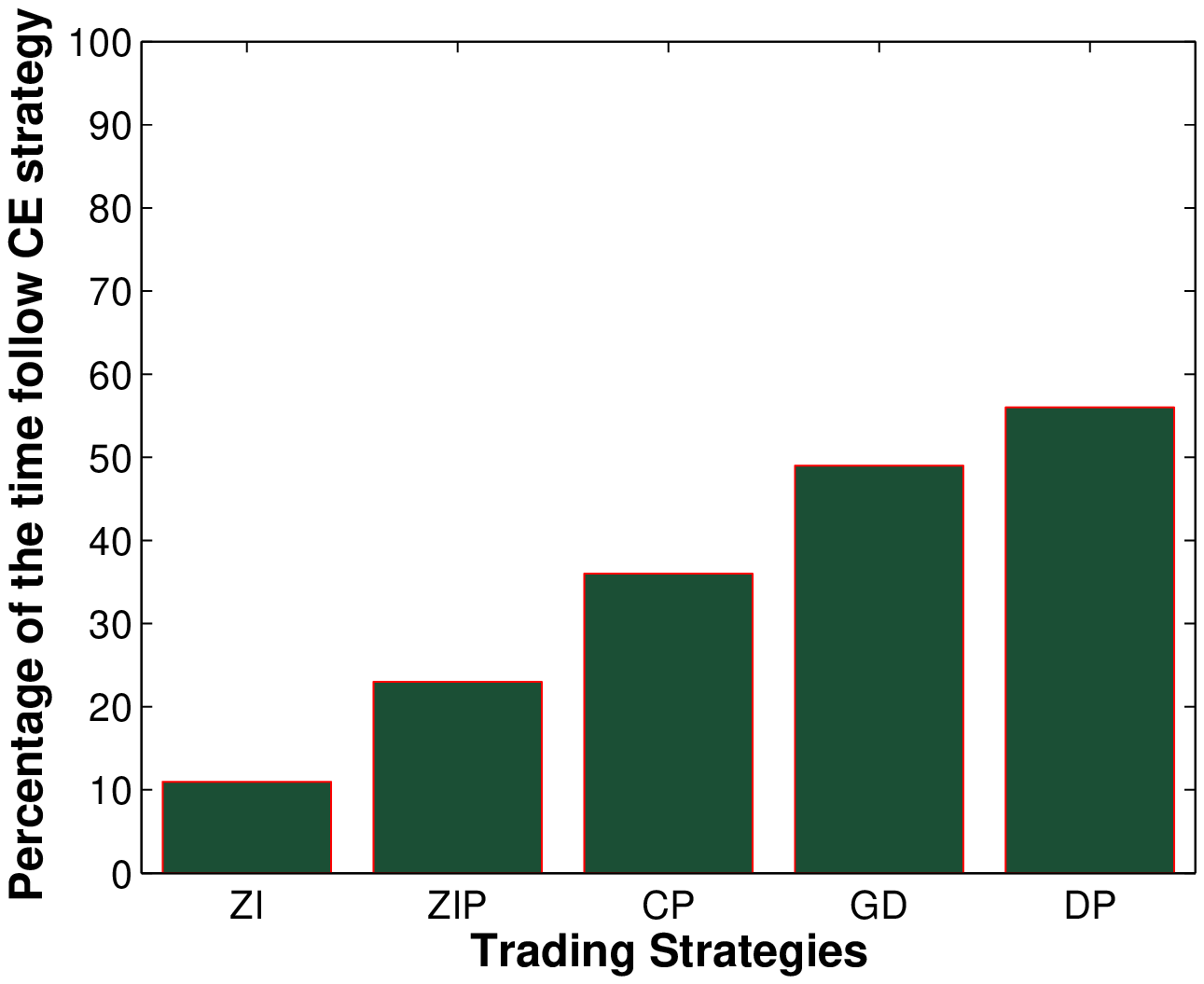}
\hspace{-0.2in}
&
\hspace{-0.2in}
\includegraphics[width=2.0in]{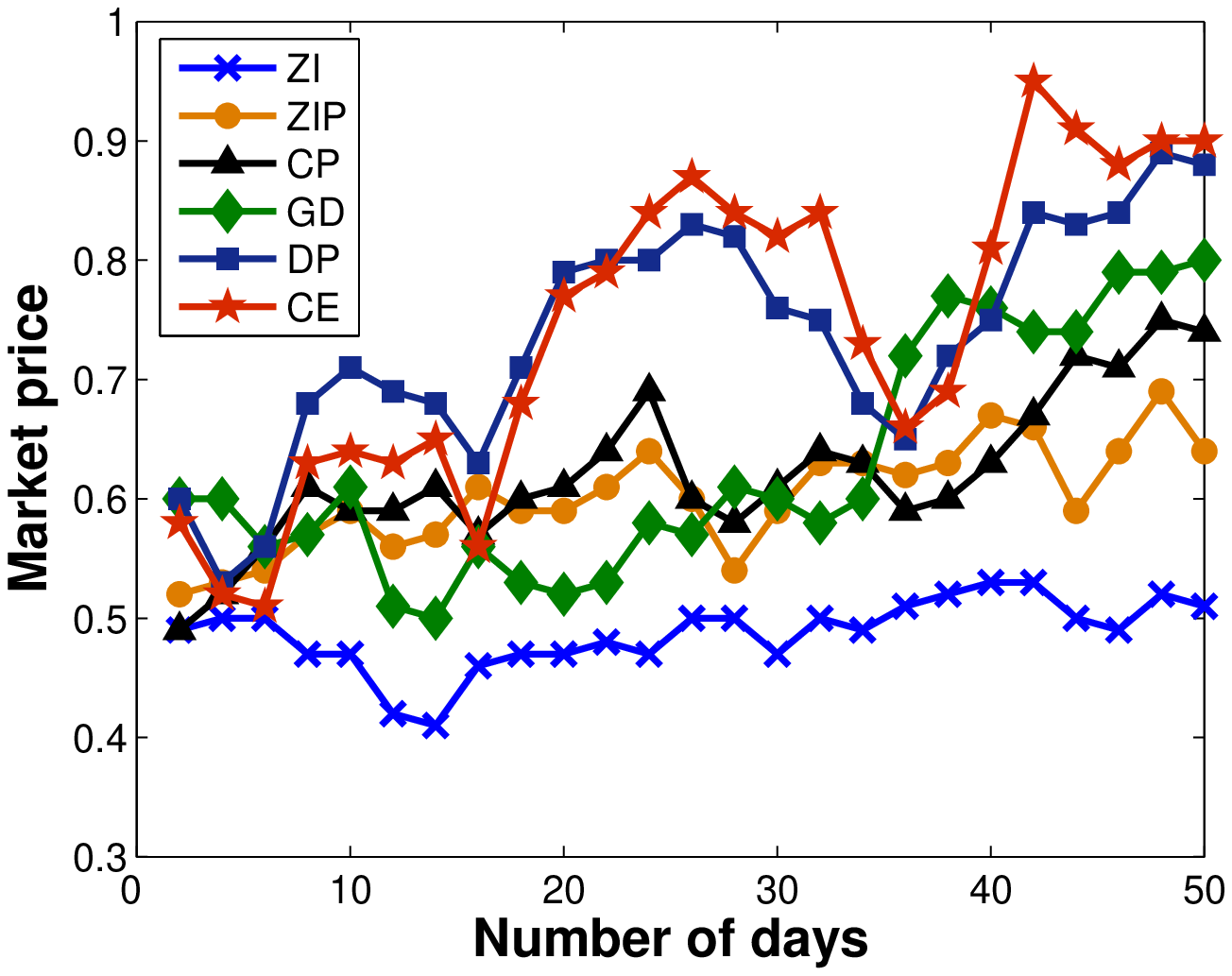}
\hspace{-0.2in}
&
\hspace{-0.2in}
\includegraphics[width=2.0in]{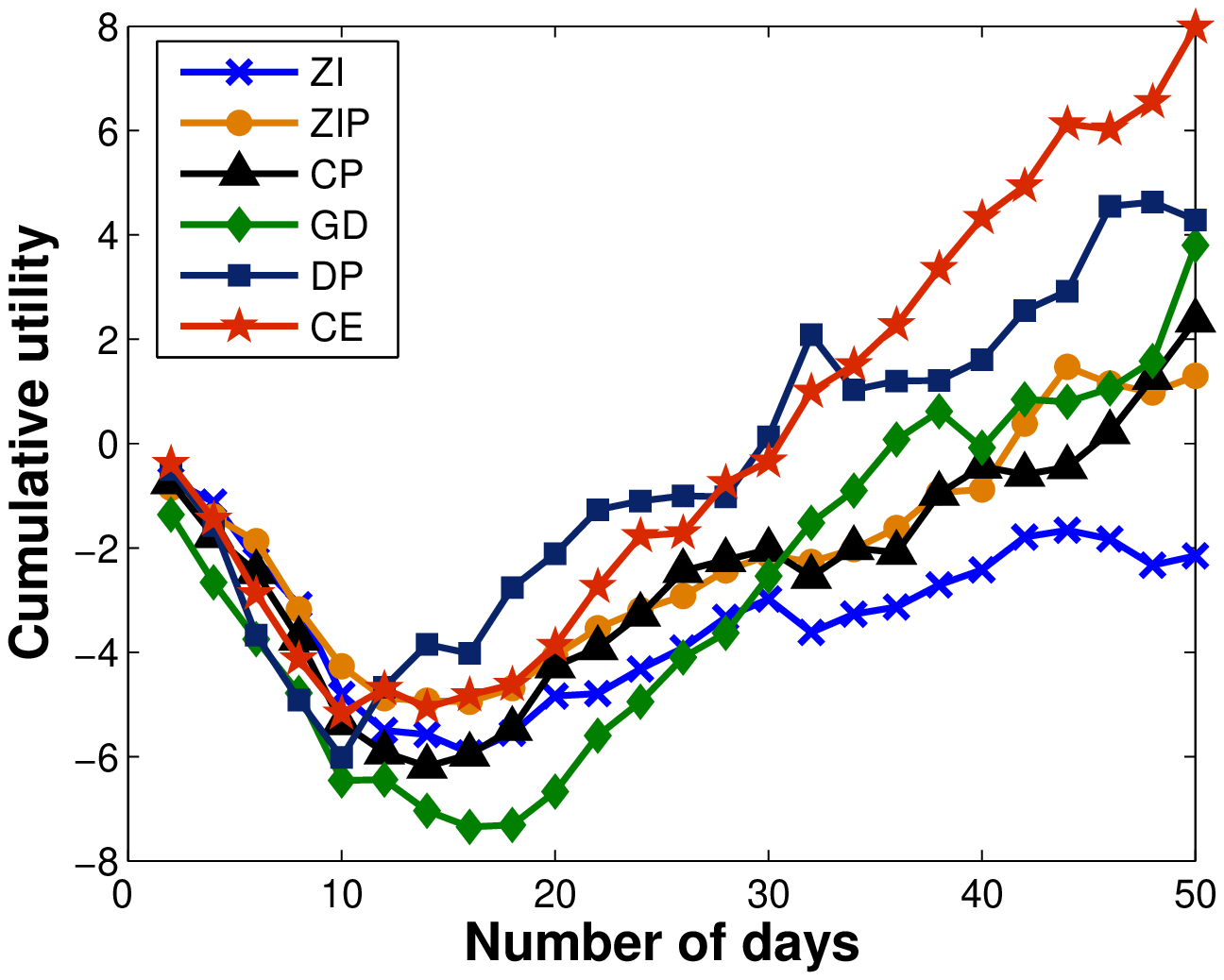}
\hspace{-0.2in}\\
\hspace{0.8in}a & \hspace{0.8in}b & \hspace{0.8in}c
\end{tabular}
\caption{(a) Percentage of the number of days each compared trading strategy follows a correlated equilibrium strategy, (b) Market Prices, and, (c) Utilities of the risk averse agent while using different trading strategies.}
\label{price_ra}
\end{center}
\end{figure}
Next, we conduct similar simulations for the market while
using risk averse agents that use the risk averse utility
function given in Section \ref{sec_riskpref}.
Since in real world most people are risk averse, we ran several
experiments with different values of
the risk preference $\theta_i>0$. We report the results for $\theta_i=0.8$ (strongly risk averse trading agent) to see a better effect of the
risk averse preference of trading agents on the market prices and their own utilities.
To compare the CE strategy with each of the other strategies we ran the simulation with
identical settings, once with risk averse agents using the CE strategy for
making trading decision and then with the same risk averse agents using the
compared strategy for making trading decisions.
Figure \ref{price_ra}(a) shows the number of times an action
recommended to a risk averse trading agent by the CE strategy is the same
as the action recommended to a risk averse trading agent by one of the compared
strategies, expressed as a percentage.
We can see that the risk averse agents employing any of the five
compared strategies end up adopting the action that it the same as the action proposed by the CE strategy
only $11-60\%$ of the time. We notice that
the ZI agents are not affected by the risk averse behavior
of the agents (they follow the CE strategy $11\%$ of the time when
the agents are either risk averse or risk neutral)
because they do not use their past utility to determine
future prices. However, other compared agent strategies
use their past utilities to predict future prices.
The concave nature of the CRRA utility function with $\theta_i>0$
lowers the utilities that these agents receive, and,
this results in these agents following the CE strategy less often.
The effect of the lowered utility due to the risk averse (concave)
utility function is also seen in Figures \ref{price_ra}(b) and (c).
Figures \ref{price_ra}(b) shows the market prices
for an event that has a final outcome $=1$.
We again observe that the agents using CE strategy
are able to predict the aggregated price of the
security more efficiently during the duration of the event.
Figure \ref{price_ra}(c) that shows the utility of the risk averse agents
while using different strategies. We see that the agents using
the CE strategy are able to obtain $42\%$ more utility
than the agents following the next best performing strategy (DP)
and $127\%$ more utility than the agents following the worst
performing strategy (ZI).

\begin{table}[t]
\begin{center}
\begin{tabular}{|l|l|l|l|l|l|l|}
\hline
Strat. &\multicolumn{3}{l|}{Risk Neutral Agent}&\multicolumn{3}{l|}{Risk Averse  Agent}\\
\cline{2-7}
&$\%F_{CE}$& Price & Utility & $\%F_{CE}$& Price & Utility\\
\hline
ZI  &  $11\%$ & $36\%$ & $123\%$ & $11\%$ & $32\%$ & $132\%$\\
\hline
ZIP & $26\%$ & $25\%$ & $85\%$ & $22\%$ & $26\%$ & $89\%$\\
\hline
CP & $41\%$ & $23\%$ & $77\%$ & $35\%$ & $20\%$ & $83\%$\\
\hline
GD & $47\%$ & $19\%$ & $52\%$ & $39\%$ & $16\%$ & $58\%$\\
\hline
DP & $68\%$ & $9\%$ & $38\%$ & $60\%$ & $10\%$ & $41\%$\\
\hline
\end{tabular}
\caption{Percentage of the number of days each compared strategy follows the CE strategy ($\%FCE$ in cols. $2$ and $5$), the percentage of the difference between each compared strategy and the CE strategy for the market price (cols. $3$ and $6$) and utility (cols. $4$ and $7$). Results are shown for both risk neutral and risk averse trading agents. }
\label{results}
\end{center}
\end{table}

The summary of all of our results is given in Table \ref{results}.
For each type of agent, risk neutral or risk averse, the first column, $\%F_{CE}$, indicates percentage of the number of days the actions of the trading agents
using each compared strategy are the same as the CE strategy. The second and the third column show the
percentage of the difference between each compared strategy and the CE strategy for the market price and utility, respectively.
In summary, the POSGI model and the CE strategy result in
better price tracking and higher utilities because they
provide each agent with a strategic behavior while taking
into account the observations of the prediction market and
the new information of the events.

\section{Conclusions}
In this paper, we have described an agent-based POSGI
prediction market
with an LMSR market maker
and empirically compared different agent behavior
strategies in the prediction market. We proved the existence of a correlated equilibrium
in our POSGI prediction market with risk neutral agents,
and showed how correlated equilibrium can be obtained
in the prediction market with risk averse agents.
We have also empirically verified that when the
agents follow a correlated equilibrium strategy
suggested by the market maker they obtain
higher utilities and the market prices are more
accurate.

In the future we are interested in conducting experiments in an
$n$-player scenario for the POSGI formulation
given in Section $3$ using richer commercially available prediction market
data sets. We also plan to analyze the performance of the CE strategy in the scenarios where the outcome of the event does not correspond to the predicted outcome of the event by the market price, i.e. when prediction market fails.
We also plan to investigate the dynamics evolving from
multiple prediction markets that interact with each other.
Finally, we are interested in exploring truthful revelation
mechanisms that can be used to limit untruthful bidding
in prediction markets.

\section*{Acknowledgments}
This research has been sponsored as
part of the COMRADES project funded by the Office of
Naval Research, grant number $N000140911174$.

\bibliographystyle{abbrv}

\begin{thebibliography}{99}
\bibitem{Aumann}
R. J. Aumann. Subjectivity and correlation in randomized strategies. \emph{Journal of Math-
ematical Economics}, 1(1):67-96, 1974.
\bibitem{Bleichrodt}
H. Bleichrodt, J. van Rijn, M. Johannesson. Probability weighting and utility curvature in QALY-based
decision making. \emph{Journal of Mathematical Psychology}, 43:238-260, 1999.
\bibitem{Chen06}
Y. Chen. Predicting Uncertain Outcomes Using Information
Markets: Trader Behavior and Information Aggregation.
\emph{New Mathematics and Natural Computation}, 2(3):1-17, 2006.
\bibitem{YChen06}
Y. Chen, T. Mullen, C. Chu. \emph{An In-Depth Analysis of
Information Markets with Aggregate Uncertainty.}
Electronic Commerce Research 2006; 6(2):201-22.
\bibitem{Chen07}
Y. Chen, D. Pennock. Utility Framework for Bounded-Loss Market Maker.
\emph{Proc. of the 23rd Conference on Uncertainty in Articifial Intelligence (UAI 2007)},
pages 49-56, 2007.
\bibitem{Chen09}
Y. Chen, S. Dimitrov, R. Sami, D.M. Reeves, D.M. Pennock, R.D. Hanson,
L. Fortnow, R. Gonen. Gaming Prediction Markets: Equilibrium Strategies with a Market Maker.
\emph{Algorithmica}, 2009.
\bibitem{Cowgill09}
B. Cowgill, J. Wolfers, E. Zitzewitz. Using Prediction Markets to Track
Information Flows: Evidence from Google. \emph{Dartmouth College (2008)}; \url{http://www.bocowgill.com/GooglePredictionMarketPaper.pdf}.
\bibitem{Das08}
S. Das. The Effects of Market-making on Price Dynamics.
Proceedings of AAMAS 2008; 887-894.
\bibitem{Dimitrov08}
S. Dimitrov and R. Sami. Nonmyopic Strategies in Prediction
Markets. \emph{Proc. of the 9th ACM Conference on Electronic Commerece (EC)}, pages 200-209, 2008.
\bibitem{Dimitrov10}
S. Dimitrov. Information Procurement and Delivery:
Robustness in Prediction Markets and Network
Routing. \emph{PhD Thesis, University of Michigan}, 2010.
\bibitem{Feigenbaum05}
J. Feigenbaum, L. Fortnow, D. Pennock and R. Sami.
Computation in a Distributed Information Market.
\emph{Theor. Comp. Science}, 343(1-2):114-132, 2005.
\bibitem{Fudenberg}
D. Fudenberg and J. Tirole. \emph{Game Theory}. MIT Press, 1991.
\bibitem{Gjerstad06}
S. Gjerstad. Risks, Aversions and Beliefs in Predictions Markets.
\emph{mimeo, U. of Arizona}, 2004.
\bibitem{Hansen04}
E. Hansen, D. Bernstein, S. Zilberstein. Dynamic programming for partially
observable stochastic games. \emph{In Proceedings of the 19th National Conference
on Artificial Intelligence}, pages 709-715, 2004.
\bibitem{Hanson07}
R. Hanson. Logarithmic Market scoring rules for Modular Combinatorial
Information Aggregation. \emph{Journal of Prediction Markets}, 1(1):3-15, 2007.
\bibitem{Hoenselaar}
A. Hoenselaar. Equilibria in Multi-Player Gamesequilibria. 2007.
\url{http://www.cs.brown.edu/courses/cs244/notes/andreas.pdf}.
\bibitem{Holt}
C.A. Holt, S.K. Laury. Risk aversion and incentive effects. \emph{American Economic Review}, 92:1644-1655, 2002.
\bibitem{Iyer10}
K. Iyer, R. Johari, C. Moallemi. Information Aggregation in Smooth Markets.
\emph{Proceedings of the 11th ACM conference on
Electronic commerce (EC'10)}, pages 199-206, 2010.
\bibitem{IEMURL}
Iowa Electronic Marketplace. URL: \url{www.biz.uiowa.edu/iem/}
\bibitem{Kadane}
J.B. Kadane and R.L.Winkler. Separating probability elicitation from utilities.
\emph{Journal of the American Statistical Association}, 83(402):357-363, 1988.
\bibitem{Luce}
R.D. Luce, C.L. Krumhansi. Measurement, scaling, and psychophysics. \emph{In Stevens Handbook of Experimental
Psychology}, Wiley: New York, 1:3-74, 1988.
\bibitem{Ma08}
H. Ma, H. Leung. Bidding Strategies in Agent-Based Continuous Double Auctions. \emph{Whitestein Series In Software Agent Technologies And Autonomic Computing}, 2008.
\bibitem{Maheu04}
J. Maheu, T. McCurdy. New Arrival Jump Dynamics, and
Volatility Components for Individual Stock Returns. \emph{The Journal
of Finance}, 59(2):755-793, 2004.
\bibitem{Manski06}
C. Manski. Interpreting the Predictions
of Prediction Markets. \emph{Economic Letters},
91(3):425-429, 2006.
\bibitem{Ostrovsky09}
M. Ostrovsky. Information Aggregation in Dynamic
Markets with Strategic Traders. \emph{Proc. of the 10th ACM conference on
Electronic commerce (EC'09)}, pages 253-254, 2009.
\bibitem{Othman10}
A. Othman and T. Sandholm. Automated Market Making
in the Large: The Gates Hillman Prediction Market.
\emph{of the 11th ACM conference on
Electronic commerce (EC'10)}, pages 367-376, 2010.
\bibitem{OthmanSandholmEC10}
A. Othman, T. Sandholm, D. Reeves and D. Pennock. A Practical Liquidity-Sensitive Automated Market Maker.
\emph{of the 11th ACM conference on
Electronic commerce (EC'10)}, pages 377-386, 2010.
\bibitem{Papa08}
C. Papadimitriou, T. Roughgarden. Computing correlated equilibria in multi-player games.
\emph{Journal of ACM}, 55(3):1-29, 2008.
\bibitem{Plott02}
C. Plott and K. Chen. Information Aggregation Mechanisms: Concept, Design and
Implementation for a Sales Forecasting Problem.
\emph{Social Science working paper no. 1131,
California Institute of Technology}, 2002.
\bibitem{Tversky}
A. Tversky and D. Kahneman. The framing of decisions and the psychology of choice. \emph{Science}, 211:453-458, 1981.
\bibitem{Wakker}
P.P. Wakker. Explaining the Characteristics of the Power (CRRA) Utility
Family, \emph{Health Economics}, 17:1329-1344, 2008.
\bibitem{Wolfers04}
J. Wolfers and E. Zitzewitz. Prediction Markets.
\emph{Journal of Economic Perspectives}, 18(2):107-126, 2004.
\bibitem{Wolfers06}
J. Wolfers and E. Zitzewitz. Interpreting Prediction Markets
as Probabilities. \emph{NBER Working Paper No. 12200}, May 2006.
\bibitem{YootlesURL}
Yootles. URL: \url{http://www.yootles.com}
\end{thebibliography}

\end{document}